\newcommand{\R}{{\mathbb R}}
\newcommand{\A}{{\mathcal A}}
\newcommand{\Q}{{\mathbb Q}}
 \newcommand{\N}{{\mathbb N}}
\newcommand{\dimension}{\mathrm{dim_H}}
\newcommand{\meas}{\mathrm{meas}}
\newtheorem{thm}{Theorem}[section]
\newtheorem{prop}[thm]{Proposition}
\newtheorem{rem}[thm]{Remark}
\begin{document}
\title{Positive Hausdorff dimensional spectrum  for   the critical almost Mathieu operator }

\author{Bernard Helffer}
\address{Laboratoire
de Math\'ematiques Jean Leray, Universit\'{e} de Nantes and CNRS, 2 rue de la Houssini\`ere 44322 Nantes Cedex (France) and Laboratoire de Math\'ematiques d'Orsay, Univ.
Paris-Sud, Universit\'e Paris-Saclay.}
\email{Bernard.Helffer@univ-nantes.fr}

\author{Qinghui LIU}
\address{
Department of Computer Science,
Beijing Institute of Technology,
Beijing 100081, PR China.}
\email{qhliu@bit.edu.cn}

\author{Yanhui QU}
\address{Department of Mathematics, Tsinghua University, Beijing 100084, PR China.}
\email{yhqu@tsinghua.edu.cn}

\author{Qi Zhou}
\address{
Department of Mathematics, Nanjing University, Nanjing 210093, China}
\email{qizhou@nju.edu.cn}

\begin{abstract}
We show that  there exists  a dense set of   frequencies  with positive Hausdorff dimension for which   the Hausdorff dimension of the spectrum  of the critical almost Mathieu operator is positive. \end{abstract}

\maketitle

\section{Introduction}

\subsection{The context}
In this paper, we are interested in the Hausdorff dimension of the spectrum of the almost Mathieu operator  $H_{\lambda,\alpha,\theta}$ in $\ell^2(\mathbb Z)$  (denoted from now on   by AMO):
\begin{equation}\label{schro}
\ell^2(\mathbb Z) \ni u \mapsto (H_{\lambda,\alpha,\theta} u)_n= u_{n+1}+u_{n-1} +2\lambda \cos(
2\pi (n\alpha + \theta) )\, u_n\,,
\end{equation}
where $\theta\in \mathbb{T}$ is  the phase, $\alpha\in (0,1)\backslash
\Q$ is the frequency and $\lambda\in \R \setminus \{0\}$
is  the coupling constant.
The AMO was first introduced by Peierls \cite{Pe},
as a model for an electron on a 2D lattice,  submitted to  a homogeneous
magnetic field \cite{Ha,R}. This model
has been extensively studied not only because of  its importance
in  physics \cite{AOS,OA,TKNN},
but also as a fascinating mathematical object.


The spectrum of
$H_{\lambda,\alpha,\theta}$ is a compact perfect set  in $\mathbb R$  and, observing that it is independent of $\theta$ since
$\alpha$ is irrational, we   denote it  by
$\Sigma_{\lambda,\alpha}$. For any $\lambda$, and for any irrational  $\alpha$,  $\Sigma_{\lambda,\alpha}$ is a Cantor set \cite{AJ05}. Each finite interval of  $\R\backslash\Sigma_{\lambda,\alpha}$ is called
an open gap   and we know from \cite{AJ08,AYZ2} that  if $\lambda \neq1$, then for any irrational   $\alpha$, all the spectral gaps are open, as predicted by  the  Gap Labelling Theorem \cite{JM82}.

However, little was known for the structure  of $\Sigma_{\lambda,\alpha}$ when $\lambda =1$.    $H_{1,\alpha,\theta}$ is called the critical almost Mathieu operator (or Harper's equation in physics);  it has particular importance in physics.  
{ One useful way to study $\Sigma_{\lambda,\alpha}$ is by periodic approximation \cite{AMS,k,last3},  through finer estimates of $\Sigma_{\lambda,p_n/q_n}$, where  $p_n(\alpha)/q_n(\alpha)$ is  the  best rational approximation of $\alpha,$ Last \cite{last3} shows that if $\alpha$ is not of bounded type, then $\Sigma_{1,\alpha}$ has  zero Lebesgue measure.  Finally, by the renormalization technique, Avila and Krikorian \cite{AK06} completed the proof that $\Sigma_{1,\alpha}$ has zero measure, and hence a Cantor set, for all irrational $\alpha$. Therefore, it is 
 natural to study the fractal dimensions of $\Sigma_{1,\alpha}$.}
  It was believed until the mid 1990's that
the box-counting  dimension  $\dim_B(\Sigma_{1,\alpha})$ equals to $\frac{1}{2}$ for almost every $\alpha$; one can consult
\cite{conjhalf2,conjhalf3,conjhalf1}
for numerical and heuristic arguments supporting this conjecture. But  in 1994, Wilkinson-Austin \cite{wilkinson_austin}
provided  numerical evidence that $\dim_B(\Sigma_{1,\alpha})
= 0.498$\, for \break  $\alpha = \frac{\sqrt{5}-1}{2}$
and thus conjectured that $\dim_B(\Sigma_{1,\alpha}) < \frac{1}{2}$ for every irrational $\alpha$. However, Jitomirskaya-Zhang \cite{JZ} showed that  if $\beta(\alpha) > 0$, then $\dim_B(\Sigma_{1,\alpha}) =1$, which disproved Wilkinson-Austin's conjecture.
 Here, $\beta(\alpha)$,  which  measures how Liouvillean $\alpha$,  is defined as
\begin{equation}\label{defbeta}
\beta(\alpha):=\limsup_{n\rightarrow \infty}\frac{\log
q_{n+1}(\alpha)}{q_n(\alpha)},\end{equation}
where  $q_n(\alpha)$ is the denominator of  the     $n$-th convergent of $\alpha.$ \\

We recall that the Hausdorff dimension of a set $S\subset \R$ is defined by
\[
\dimension(S) = \inf \left\{ t\in\mathbb{R^{+}} \, \big{|} \,
\lim_{\delta\rightarrow 0}\inf_{\delta
\textrm{-covers}}\sum_{n}(\meas(U_n))^t < \infty \right\},
\]
 where a $\delta$-cover of $S$ is a family  $(U_n)_n$ such that  $S \subset \cup_{n=1}^{\infty} U_n$,  and every $U_n$ is an interval of length smaller than $\delta$.
Last  \cite{last3} showed that if  $ q_{n+1}(\alpha)>q_n^4(\alpha)$ for a subsequence  of $n$, then $\dimension(\Sigma_{1,\alpha})\leq \frac{1}{2}$. We note that the set of such kind of frequencies is a
dense $G_\delta$ set which contains  $\{\alpha\in (0,1)\setminus\Q\, | \, \beta(\alpha)>0 \}$.   Shamis-Last \cite{LS}  showed that there  exists a dense set of  $\{\alpha \in (0,1)\setminus\Q | \beta(\alpha)>0 \}$, for which $\dimension(\Sigma_{1,\alpha})=0$. Recently, Avila-Shamis-Last-Zhou \cite{ALSZ} strengthened the result of \cite{LS} and showed that for any $\alpha$  in   $ \{\alpha \in (0,1)\setminus\Q \, | \, \beta(\alpha)>0 \}$,  $\dimension(\Sigma_{1,\alpha})=0$. As we can see, the  results of \cite{ALSZ,JZ,last3,LS} are more specific to Liouvillean frequency,  it is interesting to see if we can   say something about the Diophantine frequencies. \\
\subsection{Main results}
 In this paper, we will show the following:

\begin{thm}\label{thm1}
The set of frequencies
$$
\mathscr{F}:=\{\alpha\in (0,1)\setminus\Q:  \dimension(\Sigma_{1,\alpha})>0 \}\
$$
is dense in  $(0,1)\setminus \Q$ and has positive Hausdorff dimension.
\end{thm}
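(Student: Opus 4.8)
\emph{Strategy of proof.} The plan is to deduce both assertions from a single construction: for every sequence $(a_n)_{n\ge1}$ of continued--fraction partial quotients whose tail lies in suitably chosen ranges, the associated $\alpha=[0;a_1,a_2,\dots]$ will satisfy $\dimension(\Sigma_{1,\alpha})\ge c_0$ for a constant $c_0>0$ independent of the sequence. Density of $\mathscr F$ in $(0,1)\setminus\Q$ will then come from the freedom to prescribe the first partial quotients arbitrarily, and positivity of $\dimension(\mathscr F)$ from the combinatorial richness of the admissible sequences.

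The engine will be periodic approximation along the convergents $p_n/q_n$ of $\alpha$. By Chambers' formula, $\Sigma_{1,p_n/q_n}=\{E:\ |P_n(E)|\le4\}$ is the sublevel set of a degree--$q_n$ polynomial $P_n$, the phase--independent part of the discriminant of the period--$q_n$ transfer matrix, hence a union of $q_n$ closed bands; moreover $\Sigma_{1,\alpha}=\lim_n\Sigma_{1,p_n/q_n}$ in the Hausdorff metric, with the quantitative bound
\[
\operatorname{dist}_H\bigl(\Sigma_{1,\alpha},\Sigma_{1,p_n/q_n}\bigr)\ \lesssim\ |\alpha-p_n/q_n|^{1/2}\ \lesssim\ (q_nq_{n+1})^{-1/2}.
\]
Since the period--$q_n$ monodromies obey $M_{q_{n+1}}=M_{q_n}^{\,a_{n+1}}M_{q_{n-1}}$ (up to the usual index shift), the polynomials $P_n$ satisfy a Chebyshev--type three--term recursion, which is the critical--AMO analogue of the Sturmian trace map; resolving $\Sigma_{1,\alpha}$ inside a band of $\Sigma_{1,p_n/q_n}$ will amount to iterating this recursion non-autonomously.

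The heart of the matter will be a quantitative band--splitting estimate, to be established for $\alpha$ in the family: for a definite proportion of the bands $B$ of $\Sigma_{1,p_n/q_n}$, the $(q_nq_{n+1})^{-1/2}$--neighbourhood of $B$ contains at least $N_{n+1}\ge2$ bands of $\Sigma_{1,p_{n+1}/q_{n+1}}$, pairwise separated by gaps, with bands and gaps of length comparable to $|B|\,N_{n+1}^{-2}$; heuristically $N_{n+1}\asymp q_{n+1}/q_n\asymp a_{n+1}$, the Thouless--type bandwidth asymptotics $\meas(\Sigma_{1,p/q})\asymp 1/q$ forcing the retained length inside $B$ to be $\asymp|B|\,a_{n+1}^{-1}$. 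Letting the partial quotients grow, e.g. $a_{n+1}\in[q_n^{10},2q_n^{10}]$ (which keeps $\alpha$ Diophantine, in particular keeps $\beta(\alpha)=0$ so that we stay outside the regime of \cite{ALSZ}), makes the perturbation $(q_nq_{n+1})^{-1/2}$ and all geometric error terms negligible against the finest level--$n$ scale and against $\log N_{n+1}$. Iterating builds a Cantor subset $\mathcal C_\alpha\subset\Sigma_{1,\alpha}$ whose $n$--th generation consists of $\prod_{k\le n}N_k$ intervals of length $\asymp\prod_{k\le n}N_k^{-2}$ (up to subexponential factors), separated by comparably long gaps, and a mass--distribution argument then gives $\dimension(\Sigma_{1,\alpha})\ge\dimension(\mathcal C_\alpha)\ge c_0>0$ uniformly in the sequence; the scaling suggests $c_0$ close to $\tfrac12$, in which case Last's bound \cite{last3} (applicable here since $q_{n+1}>q_n^4$) would even pin $\dimension(\Sigma_{1,\alpha})=\tfrac12$, while a conservative form of the estimate gives only $c_0>0$, which already suffices. (Equivalently, one is showing that the integrated density of states of $H_{1,\alpha,\theta}$ is H\"older of some positive exponent for these $\alpha$, which bounds $\dimension(\Sigma_{1,\alpha})$ below since it maps $\Sigma_{1,\alpha}$ onto $[0,1]$.) Finally, since the admissible $\alpha$ arise by choosing $a_1,\dots,a_m$ freely before entering the growth regime, the family meets every continued--fraction cylinder and is dense; and these frequencies form a Cantor set in which the cylinder $[a_1,\dots,a_n]$, of length $\asymp(q_nq_{n+1})^{-1}$, contains $\asymp q_n^{10}$ admissible subcylinders of length $\asymp(q_{n+1}q_{n+2})^{-1}$, so a second mass--distribution argument yields $\dimension(\mathscr F)\ge c>0$.

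The main obstacle will be the band--splitting estimate, and especially the lower bounds on the number and lengths of the surviving sub--bands \emph{uniformly over the whole family of frequencies}. At the critical coupling the transfer cocycle is, on the spectrum, neither uniformly hyperbolic nor analytically reducible, so one cannot perturb off a hyperbolic model; instead one must prove that the non-autonomous trace recursion $(P_{n-1},P_n)\mapsto P_{n+1}$ is, on the relevant energy windows, uniformly hyperbolic with bounded distortion, with constants independent of $(a_n)$ --- and it is precisely here that the Thouless--type bandwidth asymptotics for $\Sigma_{1,p/q}$, together with their stability under small perturbations of the frequency, are needed. Granting this, the two mass--distribution arguments, the density statement, and the control of the error terms are routine.
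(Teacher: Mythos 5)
Your proposal identifies the right skeleton (periodic approximation along convergents, a self-similar band-in-band picture, a mass distribution argument, density from freedom in the initial block), but it has two genuine defects, one of which is fatal to the plan as stated.

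First, the scaling heuristic underlying the mass distribution argument is wrong. You posit that a band $B$ of $\Sigma_{1,p_n/q_n}$ spawns $N_{n+1}\asymp a_{n+1}$ sub-bands of length $\asymp |B|\,N_{n+1}^{-2}$, i.e.\ a polynomial cascade, from which a dimension $c_0$ near $\tfrac12$ would follow. The actual splitting, established by the Helffer--Sj\"ostrand renormalization which is the engine the paper relies on (Theorem~\ref{hs}(iii)), is \emph{exponential} in the next partial quotient: the $\asymp a_{n+1}$ sub-bands each have relative length in $[e^{-d_2a_{m+n+1}},e^{-d_1a_{m+n+1}}]$, not $\asymp a_{n+1}^{-2}$. (The Thouless asymptotic $\meas(\Sigma_{1,p/q})\asymp 1/q$ controls the \emph{sum} of band lengths, not each one; the individual bands away from $E=0$ are exponentially thin.) Consequently the mass distribution principle gives a lower bound of order $\frac{\log a_{n+1}}{a_{n+1}}$ per step (this is exactly the paper's bound $\gtrsim \frac{\log G_\ast(\alpha)}{A^\ast(\alpha)}$), which is \emph{not} uniformly bounded away from $0$. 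Your choice $a_{n+1}\in[q_n^{10},2q_n^{10}]$ forces $a_n\to\infty$, hence $A^\ast(\alpha)=\infty$, and the lower bound you would obtain degenerates to $0$; indeed the paper's Remark after Theorem~\ref{thm2} conjectures $\dimension(\Sigma_{1,\alpha})=0$ precisely in this regime. To get a positive lower bound one must keep the partial quotients \emph{large but bounded}, which is what the paper does: $a_n=C$ eventually (for density), or $C\le a_n\le 10C$ (for $\dimension(\mathscr F)>0$, via Good's theorem on continued-fraction Cantor sets).

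Second, you flag the quantitative band-splitting estimate as the main obstacle and then outline an approach via uniform hyperbolicity with bounded distortion of the non-autonomous trace recursion, with constants uniform in $(a_n)$. As you yourself note, the critical transfer cocycle is neither uniformly hyperbolic nor reducible on the spectrum, so this is not a route that is available off the shelf; it is precisely to circumvent it that the paper imports the semi-classical renormalization of Helffer--Sj\"ostrand (pseudodifferential quantization near a rational, Bohr--Sommerfeld plus tunneling estimates), which is a fundamentally different technology from the Chebyshev/trace-map machinery used for Sturmian models. Granting yourself that estimate is granting the whole paper. So the proposal as written contains a real gap in the key lemma, and even modulo that gap the specific choice of frequencies would not yield the stated conclusion.
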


\begin{rem}
Combining Theorem \ref{thm1} with a result of \cite{ALSZ}, one thus know that $\mathscr{F}$ is dense in $
 \{\alpha \in (0,1)\setminus\Q \,| \, \beta(\alpha)=0 \}.
$
\end{rem}

 Indeed,  Bellissard\footnote{ Private conversation with Y. Last, circ. 1995.} conjectured that there should
exist some $\kappa \in (0, 1/2]$ such that
$\dimension(\Sigma_{1,\alpha}) = \kappa$ for almost every
$\alpha$. As far as we know, Theorem~\ref{thm1} is the \textit{first} result which shows the existence of positive Hausdorff dimension of the spectrum for the  critical almost Mathieu operator.

In fact, we can say more about the frequency and the lower bound of the Hausdorff dimension.  Denote the  continued fraction expansion of $\alpha$ as:
\[
\alpha = \frac{1}{a_1 + \frac{1}{a_2 + \frac{1}{a_3 + \ldots}}}:=[a_1,a_2,a_3,\cdots]
\]
and  define $$
A^\ast(\alpha):=\limsup_{n\to\infty} \frac{\sum_{i=1}^na_i}{n} \ \ \ \text{ and }\ \   \ G_\ast(\alpha):=\liminf_{n\to\infty} \left(\prod_{i=1}^na_i\right)^{1/n}.
$$
In 1994, Wilkinson-Austin \cite{wilkinson_austin} gave some  numerical and heuristic arguments showing that if $n$ is large enough, then
$\dim_B(\Sigma_{1,\alpha_n})$ approaches $ \frac{\log 2}{\log n}$, where
\begin{equation}\label{defalphan}
\alpha_n =[n,n,n,\cdots]\,.
\end{equation}
Motivated by this paper, we will show the following:

\begin{thm}\label{thm2}
Fix $M >0 $ and $ \widehat m \in \mathbb N$ s.t. $\widehat m \ge2$. There exist constants $C>1$  and $C' >0$
such that for any $\alpha=[a_1,a_2,a_3,\cdots] $ with, for some $m\in\{0,\cdots,\widehat m\}$,
$$
1\leq a_i\le M, \ (1\le i\le m);\ \ \ a_i\ge C, \ (i>m)
$$
we have
\begin{equation}\label{Sigma-alpha}
\dimension(\Sigma_{1,\alpha}) \geq \frac{1}{C'}   \frac{\log G_\ast(\alpha)}{A^\ast(\alpha)}.
\end{equation}
Consequently, if $n\geq C$, then,  for $\alpha_n$ defined by \eqref{defalphan},
\begin{equation}\label{Sigma-alpha-n}
\dimension(\Sigma_{1,\alpha_n}) \geq \frac{1}{C'}  \, \frac{\log n}{n}\,.
\end{equation}
\end{thm}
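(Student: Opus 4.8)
The plan is to establish \eqref{Sigma-alpha}; everything else follows. Estimate \eqref{Sigma-alpha-n} is immediate: for $\alpha_n=[n,n,n,\cdots]$ all partial quotients equal $n$, so $\alpha_n$ satisfies the hypotheses with $m=0$ once $n\ge C$, while $A^\ast(\alpha_n)=n$ and $\log G_\ast(\alpha_n)=\log n$. Theorem \ref{thm1} also follows: prescribing $a_1,\dots,a_m$ freely in $\{1,\dots,M\}$ and $a_i\ge C$ for $i>m$, and letting $M,m$ vary, gives a dense subset of $(0,1)\setminus\Q$ on which $\log G_\ast/A^\ast>0$; and, for a fixed large $N$, the set of $\alpha$ all of whose partial quotients lie in $\{N,N+1\}$ is a continued-fraction Cantor set of positive Hausdorff dimension on which $A^\ast\le N+1$ and $\log G_\ast\ge\log N$, hence contained in $\mathscr F$ by \eqref{Sigma-alpha}. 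So fix $\alpha$ as in the statement (we may assume $A^\ast(\alpha)<\infty$, otherwise there is nothing to prove) and write $p_n/q_n=[a_1,\dots,a_n]$. I would bound $\dimension(\Sigma_{1,\alpha})$ from below by constructing a Cantor subset $K\subset\Sigma_{1,\alpha}$ together with a probability measure $\mu$ on $K$ satisfying $\mu(I)\le C''\,|I|^{t}$ for every interval $I$, with $t$ a fixed positive multiple of $\log G_\ast(\alpha)/A^\ast(\alpha)$; the mass distribution principle then gives $\dimension(\Sigma_{1,\alpha})\ge\dimension(K)\ge t$.

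The construction should come from the self-similar structure of the critical spectrum across the continued-fraction scales, which one extracts, via periodic approximation, from the quantitative renormalization analysis of the critical almost Mathieu cocycle (in the spirit of the renormalization of Avila--Krikorian \cite{AK06} and of the almost reducibility theory). Concretely, I would isolate a geometric statement of roughly this form: there are absolute constants $c\in(0,1)$ and $C>1$ such that, whenever $a_i\ge C$ for $i>m$, one can find a nested sequence $\mathcal B_0\supset\mathcal B_1\supset\cdots$ of finite families of pairwise disjoint compact intervals (``bands'') with $\bigcup_{B\in\mathcal B_n}B\supset\Sigma_{1,\alpha}\supset K:=\bigcap_n\bigcup_{B\in\mathcal B_n}B$, such that every $B\in\mathcal B_n$ contains at least $c\,a_{n+1}$ members of $\mathcal B_{n+1}$, and each such member, as well as each gap separating two of them inside $B$, has length at least $e^{-C a_{n+1}}\,|B|$. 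For the finitely many ``prefix'' scales $n< m$, where $a_i\le M$, one needs only the elementary fact that $\Sigma_{1,p_n/q_n}$ is the sublevel set $\{E:|\mathcal P_{q_n}(E)|\le 4\}$ of a degree-$q_n$ polynomial with controlled coefficients, hence a finite union of bands of length bounded below in terms of $M$; these scales are harmless below. The delicate point, already at this stage, is that $K$ must sit inside $\Sigma_{1,\alpha}$ itself, not merely inside the periodic approximants $\Sigma_{1,p_n/q_n}$: one arranges this by retaining only bands that dominate the approximation error $|\alpha-p_n/q_n|<(q_nq_{n+1})^{-1}$, which is small precisely because the hypothesis forces $q_{n+1}\ge C\,q_n$.

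Granting this, put the uniform probability measure $\mu$ on $K$, so that $\mu(B)\le\prod_{i\le n}(c\,a_i)^{-1}$ for $B\in\mathcal B_n$ (up to a bounded prefix factor), while unwinding the estimates shows that every band and every gap at scale $n$ has length at least $e^{-C\sum_{i\le n}a_i}$ (again up to a bounded prefix factor). Since consecutive bands at a given scale are separated by gaps no shorter than their minimal length, an interval that is short compared with scale $n$ meets only boundedly many members of $\mathcal B_n$, and a routine case analysis according to which two consecutive scales bracket $|I|$ yields $\mu(I)\le C''\,|I|^{t}$ for all intervals $I$, with $t:=\frac1{2C}\cdot\frac{\log G_\ast(\alpha)}{A^\ast(\alpha)}$; here the definitions of $G_\ast$ and $A^\ast$ enter through the comparison of $\sum_{i\le n}\log a_i$ with $\sum_{i\le n}a_i$ (using $\liminf_n\frac1n\sum_{i\le n}\log a_i=\log G_\ast(\alpha)$ and $\limsup_n\frac1n\sum_{i\le n}a_i=A^\ast(\alpha)$), and the factor $\frac12$ together with the bounded prefix factors are absorbed using that $\log G_\ast(\alpha)\ge\log C$ is large. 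By the mass distribution principle, $\dimension(\Sigma_{1,\alpha})\ge\dimension(K)\ge t$, which is exactly \eqref{Sigma-alpha} with $C'=2C$.

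These last two steps are soft; the entire weight of the argument is the geometric input of the second paragraph, and above all the \emph{lower} bound $e^{-C a_{n+1}}$ on the relative size of scale-$(n+1)$ bands and gaps. One has to rule out that passing one continued-fraction step --- which multiplies the number of bands by a factor of order $a_{n+1}$ --- can shrink bands faster than geometrically in $a_{n+1}$. This requires uniform control of the critical almost Mathieu transfer matrices over the period $q_{n+1}$ along the spectrum, where the Lyapunov exponent vanishes but the cocycle is only almost bounded, and this is exactly where the hypothesis $a_i\ge C$ is used: it forces each of the (order $a_{n+1}$) elementary renormalization moves to contract band lengths by no more than a fixed factor, so that the compounded loss is only $e^{-C a_{n+1}}$ rather than super-exponential. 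Establishing this --- together with the concomitant separation of bands and the inclusion $K\subset\Sigma_{1,\alpha}$ --- is the technical heart of the matter and where I expect nearly all the difficulty to lie.
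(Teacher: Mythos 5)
Your overall strategy coincides with the paper's: extract from a renormalization--type band structure a nested Cantor subset $K\subset\Sigma_{1,\alpha}$, endow it with the uniform Bernoulli measure, and apply the mass distribution principle (the paper uses the local version, Prop.~10.1 of Falconer) with $t$ a constant multiple of $\log G_\ast(\alpha)/A^\ast(\alpha)$, the $\sum\log a_i$ vs.~$\sum a_i$ bookkeeping giving the exponent. Up to constants this is the argument in \S4.

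Two differences are worth flagging. First, the geometric input you posit is genuinely the technical heart of the matter, but it is \emph{not} obtained from the Avila--Krikorian cocycle renormalization or almost reducibility; the paper relies on the Helffer--Sj\"ostrand semi-classical analysis of Harper's equation \cite{HS,HS2}, specifically Theorem~0.1 of \cite{HS2} (restated as Theorem~\ref{hs} here). The renormalization there is at the level of $h$-pseudodifferential operators (Weyl quantization of $\cos x+\cos\xi$, tunneling estimates, microlocal reduction to an effective Hamiltonian near each band) rather than of $SL(2,\R)$-cocycles, and it is precisely what yields the uniform constants over the scales $a_{m+k}\ge C$. Second, the gap estimate you propose --- gaps between scale-$(n{+}1)$ bands $\gtrsim e^{-Ca_{n+1}}|B|$ --- is weaker than what the paper actually gets and uses, namely \eqref{low}: gaps $\gtrsim (c_1/a_{n+1})|B|$, i.e., only \emph{polynomially} small in $a_{n+1}$ while the bands are \emph{exponentially} small. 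This sharper, asymmetric estimate is what lets the paper prove cleanly (in the Claim inside the proof of Proposition~\ref{prop-3.2}, via \eqref{tilde-C}--\eqref{only-for-exp}) that $B(x,r)$ meets only a single band of generation $n_r-1$, so that $\mu(B(x,r))\le 1/\prod_{i<n_r}\kappa_i$ directly. With only your exponential lower bound on gaps the single-band claim fails in general (gap and band sizes become comparable, and an interval of length $\approx|B_{n-1}|$ can meet $\sim a_n$, not $O(1)$, scale-$n$ bands), and your ``routine case analysis'' would have to be rerouted through the measure of the parent scale-$(n-1)$ band; this still works but is not quite the same computation as you sketched, and the sentence ``meets only boundedly many members of $\mathcal B_n$'' is, as stated, false. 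Neither point affects the final conclusion, and since you explicitly defer the proof of the geometric input --- as does the paper, citing \cite{HS2} --- I regard this as essentially the intended argument.
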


\begin{rem}
Now considering Theorem \ref{thm2} and the results of \cite{ALSZ,LS}, it is reasonable to conjecture that if $A^\ast(\alpha)=\infty$, then $\dimension(\Sigma_{1,\alpha}) =0\,$.
\end{rem}

\section{The covering structure of the spectrum: main statement}

Assume
$
\alpha=[a_1,a_2,a_3,\cdots].
$ Helffer and  Sj\"ostrand \cite{HS,HS2} exhibited a fine covering structure of the spectrum for a special class of frequencies. 
In the following,  we describe  a reformulation of this statement using the coding language just for the convience of the proof, readers can consult the appendix for its original formulation. 
Before to explain the general construction we detail the two first steps.\\

  {\bf In the $0$-th step},    there are $q_{m}(\alpha)$ disjoint bands. We code them by
$$
J_1,J_2,\cdots, J_{q_{m}(\alpha)}.
$$
 Thus
 $$\Theta_0=\{1,2,\cdots,q_{m}(\alpha)\}$$
 is the set of words to code the bands of $0$-generation.\\

{\bf In the first step}, for each band $J_i$, there are finitely many sub-bands inside it. For the mid-band, we code it by  $J_{i \cdot 0}$. Here $\cdot$ means the concatenation of words.   On the left, there are $m_i$ sub-bands with good estimates on the band-length, which we code by $J_{i\cdot (-m_i) },\cdots J_{i\cdot (-1)}$. Similarly, on the right, there are $n_i$ sub-bands, which we code by $J_{i\cdot 1},\cdots, J_{i\cdot n_i}$. Write
$$\mathcal{A}_i=\{-m_i,\cdots, -1, 1,\cdots,n_i\}$$
 (here $\mathcal{A}$ means alphabet), then
$\{i\cdot j: j\in \mathcal{A}_i \}$ is the set of words to code the subbands inside $J_i$. Now if we collect the words for all $i\in \Theta_0$, we obtain $\Theta_1:$
$$
\Theta_1=\{i\cdot j: i\in \Theta_0, j\in \mathcal{A}_i\}.
$$
 Since for each $i\in \Theta_0$, there is only one  mid-band $J_{i\cdot 0}$ inside $J_i$,
 $$\Omega_1=\{i\cdot 0: i\in \Theta_0\}$$
  is the set of words to code the mid bands of $1$-generation.

We now define inductively two sequences of words  $\Theta_k, \Omega_k$ as follows.
Assume $\Theta_k$ and $\Omega_k$ have been defined for some  $k\ge 1\,$. For any $\theta\in \Theta_k$, fix a  vector $(m_\theta, n_\theta)\in \mathbb N^* \times \mathbb N^*$  (where {$\mathbb N^*=\mathbb N \setminus \{0\}$})
and write
$$
\A_\theta:=\{-m_\theta,\cdots, -1,1, \cdots, n_\theta\}.
$$
Define
\begin{eqnarray*}
\Theta_{k+1}&:=&\{\theta\cdot i: \ \theta\in \Theta_k;\ i\in \A_\theta\}\\
\Omega_{k+1}&:=& \Omega_k\cup \{\theta\cdot 0: \theta\in \Theta_k\}.
\end{eqnarray*}

Write $\Omega_0=\emptyset\,.$ Define
$$
\Theta:=\bigcup_{k\ge 0}\Theta_k\ \  \mbox{ and}\ \   \Omega:=\bigcup_{k\ge 0}\Omega_k.
$$

As continuation of \cite{HS}, the following theorem is proved in \cite{HS2}, the readers can just consult the appendix for its original statement.

\begin{thm} \label{hs}
 Fix  $ \widehat m \in \N\,,$ and $M\ge2$. Then there exist $\epsilon_1>0$  and, for $0 < \epsilon_0\leq \epsilon_1$, some  constants
 $ C_1>0, b_2 > b_1>0\,, c_1>0, d_2>d_1>0$ such that  if $\alpha=[a_1,a_2,a_3,\cdots]$ and for some $0\le m \le  \widehat m$
\begin{eqnarray}\label{introm}
 \left\{ \begin{array}{ccc} 1\leq   a_\ell \le M\,, & \ell \leq m \\
a_\ell \ge C_1\,, & \ell \geq m+1\end{array} \right.\,,
\end{eqnarray} then there exists a  sequence $\{(m_\theta, n_\theta):\theta\in \Theta\}$ with
\begin{equation}\label{m-n-theta} 
b_1 \,a_{k+m}\le m_\theta \leq b_2 \,a_{k+m} \,\mbox{ and }  \,b_1 \,a_{k+m}\le n_\theta \le b_2 \,a_{k+m}\,,\, \forall k\ge1,\, \forall \theta\in \Theta_{k-1}\,,
\end{equation}
and  a family of bands
$$
\{J_\theta: \theta\in \Omega\cup \Theta\}
$$
such that:
\begin{itemize}
\item[(i)]
For each $k\ge 0$, $\{J_\theta: \theta\in \Omega_{k}\cup \Theta_k\}$ is a covering of  $\Sigma_{1,\alpha}$:
$$
\Sigma_{1,\alpha}\subset \bigcup_{\theta\in \Omega_{k}\cup \Theta_k} J_\theta\,.
$$
\item[(ii)]   For each $k\ge1$ and $\theta\in \Theta_{k-1}\,$,
$$
\partial J_\theta\subset \Sigma_{1,\alpha}.
$$
\end{itemize}
 For each    $i\in \A_\theta\cup\{0\},$
$$
J_{\theta \cdot i}\subset J_\theta,
$$
 $J_{\theta \cdot (i+1)}$ is on the right of $J_{\theta\cdot i}\,$. Moreover,
 \begin{equation}\label{low}
\frac{c_1}{a_{k+m}}\le  \frac{d(J_{\theta\cdot (i+1)},J_{\theta\cdot i})}{|J_\theta|}.
 \end{equation}

 (iii) For each $k\ge 1$ and $\theta\in \Theta_{k-1}$,
 $$
 \frac{|J_{\theta\cdot 0}|}{|J_{\theta}|}\le \epsilon_0\,;\ \ e^{-d_2 a_{k+m}}\le   \frac{|J_{\theta\cdot i}|}{|J_{\theta}|}\le e^{-d_1 a_{k+m}}, \ (i\in \A_\theta)\,.
 $$
 \end{thm}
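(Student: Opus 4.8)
The statement is a repackaging, in the combinatorial coding just introduced, of the semiclassical band analysis of Helffer and Sj\"ostrand, so the plan is to make the dictionary between the two formulations precise and to check that each item survives the translation, the genuine analytic input being quoted from \cite{HS,HS2} (see the appendix for the original formulation). Recall that under the hypothesis \eqref{introm} the Helffer--Sj\"ostrand construction furnishes, scale by scale, a nested family of closed intervals (``bands'') covering $\Sigma_{1,\alpha}$: at the base scale there are exactly $q_m(\alpha)$ bands, and a band at generation $k-1$ splits into a central sub-band of exponentially small relative length together with a number of lateral sub-bands comparable to $a_{k+m}$, separated from one another by gaps whose relative size is comparable to $1/a_{k+m}$. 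I would let $\Theta_0$ be the labels of the $q_m(\alpha)$ base bands, define $J_{\theta\cdot i}$ to be the $i$-th lateral sub-band of $J_\theta$ (with $J_{\theta\cdot 0}$ the central one, which is henceforth frozen into $\Omega$ and never subdivided again), and read off $(m_\theta,n_\theta)$ as the numbers of lateral sub-bands on the two sides of the central one inside $J_\theta$.

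With this dictionary the verification is essentially bookkeeping. Property (i) holds because the Helffer--Sj\"ostrand cover at step $k$ consists precisely of the sub-bands produced at that step \emph{together with} all central sub-bands already set aside, i.e.\ of $\Theta_k\cup\Omega_k$. Property (ii) is the statement that each $J_\theta$, $\theta\in\Theta_{k-1}$, is a genuine spectral band (a maximal interval on which the relevant renormalized discriminant takes values in $[-2,2]$), so that its two endpoints belong to $\Sigma_{1,\alpha}$. The lower bound \eqref{low} on the inter-band gaps, the two-sided exponential bound on $|J_{\theta\cdot i}|/|J_\theta|$ and the bound $|J_{\theta\cdot 0}|/|J_\theta|\le\epsilon_0$ in (iii), and the estimate \eqref{m-n-theta} on $(m_\theta,n_\theta)$ are all the corresponding Helffer--Sj\"ostrand estimates, with $c_1,d_1,d_2,b_1,b_2$ the ($\epsilon_0$-dependent) constants there; the partial quotient governing the $k$-th splitting is $a_{k+m}$ rather than $a_k$ precisely because the first $m$ continued-fraction digits $a_1,\dots,a_m$ are ``used up'' in producing the $q_m(\alpha)$ base bands.

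The one point requiring genuine care---and the main obstacle---is that the underlying analytic theorem is not a single statement but an inductive renormalization scheme: at each scale one must check that the hypotheses of the previous scale regenerate, which is where the smallness assumption $a_\ell\ge C_1$ for $\ell>m$ enters, since the effective semiclassical parameter at the $k$-th step has size $\sim e^{-c\,a_{k+m}}$ and must stay below a fixed threshold $\epsilon_1$. I would handle this by fixing $\epsilon_1$ and $C_1$ at the outset so that the construction closes up uniformly in $k$ and in $\theta$, and then extract $C_1,b_1,b_2,c_1,d_1,d_2$ as the scale-independent constants coming out of the uniform step estimate. All the hard analysis---the WKB/tunnelling estimates, the Rouch\'e argument locating the lateral sub-bands, and the control of the central band---is exactly the content of \cite{HS,HS2} and is not reproved here; the task in this section is to check that those outputs, once phrased in the tree language above, yield (i)--(iii) together with \eqref{m-n-theta} and \eqref{low} verbatim.
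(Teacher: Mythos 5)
Your proposal is correct and matches what the paper actually does: the paper does not reprove the theorem but simply cites Theorem~0.1 of \cite{HS2}, reproducing a translated version of it in the appendix, and leaves the dictionary between the original band/$I_\ell(h)$ formulation and the tree coding $(\Theta_k,\Omega_k,J_\theta)$ implicit. You have made that dictionary explicit and correctly identified the one point of substance (the uniform-in-$k$ closure of the renormalization, guaranteed by $a_\ell\ge C_1$), so your write-up is a faithful, slightly more detailed version of the paper's treatment.
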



 \section{Harper's model and semi-classical analysis}
 \subsection{Harper's model  in the rational case}
We refer also to the survey of J.~Bellissard \cite{bel} for a state of the art in 1991. When $\alpha$ is irrational, an equivalent way (observing that in this case
 $\Sigma_{\lambda,\alpha} = \cup_\theta \Sigma_{\lambda,\alpha,\theta}$)  for the analysis of the spectrum in the case of a square lattice is  to consider  the so-called Harper model, which this time is defined on $\ell^2(\mathbb Z^2,\mathbb C)$ by
   $$
  ( H^\gamma  u)_{m,n} :=  u_{m+1,n} + u_{m-1,n} +  e^{i  \gamma m} u_{m,n+1} +  e^{-i \gamma m} u_{m,n-1}\,,
  $$
  where $\gamma$ denotes the flux of the constant magnetic field through the fundamental cell of the lattice.

  When $\alpha :=\frac{\gamma}{2 \pi}$ is a rational,   Floquet theory permits to show that the spectrum is the union   of the spectra of a family of  $q\times q$ matrices $M_{p,q}(\theta_1,\theta_2)$ depending
  on the quasi-momenta $\theta=(\theta_1,\theta_2) \in \mathbb R^2$. In this way, we get that the spectrum is the union of $q$ bands
  $[\gamma_\ell,\delta_\ell]$  with $\delta_\ell \leq \gamma_{\ell +1}$, with strict inequality, except when $q$ is even for $\ell =\frac q 2$.   More precisely, when
$\gamma=2\pi p/q$,  where $p\in\mathbb Z$ and $q\in\mathbb N^*$ are relatively prime,
the two following matrices {in $M_q(\mathbb C)$ play an important role:
\begin{equation*}
J_{p,q}={\rm diag}(e^{i(j-1)\gamma})\,,
\end{equation*}
and
\begin{equation*}
(K_q)_{jk}= 1\; \mbox{ if } k\equiv j+1\, [q]\,,\,0 \mbox{ else.}\,
\end{equation*}
In the case of Harper, the family of matrices is
\begin{equation}\label{bs1}
M_{p,q}(\theta_1,\theta_2) =
  e^{i\theta_1} J_{p,q} + e^{-i\theta_1} J_{p,q}^* + e^{i\theta_2} K_q + e^{-i \theta_2} K_q^* \,.
\end{equation}
The Chambers formula gives a very elegant formula for this determinant:
\begin{equation}\label{chambers}
\det (M_{p,q}(\theta_1,\theta_2)-\lambda)  = f_{p,q}(\lambda)+ (-1)^{q+1} 2 \left(\cos q\theta_1 + \cos q \theta_2\right)\,,
\end{equation}
where $ f_{p,q}$ is a polynomial of degree $q$.
Each band  $I_\ell$ is described by a solution $\lambda_\ell (\theta_1,\theta_2)$ of the Chambers equation which can be expressed in the form
\begin{equation}\label{chambers2}
\lambda_\ell (\theta_1,\theta_2)=\varphi_{\ell,p,q} ( 2 \left(\cos q\theta_1 + \cos q \theta_2\right))\,.
\end{equation}
\subsection{Semi-classical analysis}\label{s7}
It can be shown that the spectrum $\Sigma_{\lambda,\alpha}$ is the same as the spectrum of the  operator $\widehat H_\gamma$  acting on $L^2(\mathbb R)$ defined by
$$
L^2(\mathbb R) \ni u \mapsto  \lambda  (\tau_\gamma  + \tau_{- \gamma}) u + 2 \cos x \, u \in L^2(\mathbb R)\,,
$$
where $\tau_\gamma$ is defined by $\tau_\gamma u (x)= u(x-\gamma)$.

In other words, $\widehat H_\gamma$ is the  Weyl's $\gamma$-quantization of the symbol $(x,\xi) \mapsto  2(\lambda \cos \xi + \cos x) $.
Let us recall what is meant by this.  The   symbols are $C^\infty$ functions of $\mathbb R^2\ni (x,\xi)\mapsto p(x,\xi,h)$ which are  $C^\infty$ bounded  (in our case they are in addition $(2\pi)$-periodic in each variable)  depending on a semi-classical
parameter $\gamma = h\in[-h_0,0)\cup (0,h_0]$, $h_0>0$ (view as ``little'') and satisfying
\begin{equation}
 \forall (j,k)\in\N^2\,;\,\exists C_{j,k}\,;\,\forall (x,\xi)\in\R^2,\,
|\partial_x^j\partial_\xi^k p(x,\xi,h)|\leq C_{j,k} \,.\end{equation}

The  Weyl quantization of the symbol $p$ (for $h\neq 0$, $|h|\leq h_0$) is the pseudodifferential operator acting on $L^2(\R)$ by
\begin{equation}
\text{Op}^W_h(p)u(x)= \frac{1}{2\pi h}\int\!\!\!\int e^{i(x-y)\xi/h} p(\frac{x+y}{2},\xi,h)\, u(y)\,dy\,d\xi \,.
\end{equation}
An important fact is that when $p$ is real valued the associate operator is self-adjoint. This approach is only powerful when $h$ is small. This is what is call semi-classical analysis. Hence, this leads as to analyze the spectrum near $\alpha=0$.
\subsection{Semi-classical analysis near $\alpha=0$ when $\lambda=1$.}
Here we have to analyze the spectrum of the $h$-pseudodifferential operator of symbol $p(x,\xi):=2( \cos x + \cos \xi)$, i.e.
$ 2( \cos x + \cos h D_x) = 2 \cos x +  (\tau_h + \tau_{-h})$. For $ E \in [-4,4]\setminus \{0\}$, semi-classical analysis says that one has first
 to look at the energy level  $p(x,\xi)=E$. Here we see that the energy level is the union of curves which can be indexed by $\mathbb Z^2$. Modulo $\mathcal O (h^\infty)$, the spectrum is obtained by looking at a Hamiltonian has a symbol $\tilde p (x,\xi)$ such that $\tilde p (x,\xi) \rightarrow + \infty$ and
  $\tilde p^{-1} (E-\epsilon, E +\epsilon) = p^{-1} (E-\epsilon, E +\epsilon) \cap (-\pi,\pi)^2$. For this operator the spectrum is discrete and is given near $E$ by the Bohr-Sommerfeld quantization which determines a sequence of eigenvalues $\lambda_k(h)$ in $(-4, -\epsilon_0) \cup (\epsilon_0,4)$, whose asymptotic is known modulo $\mathcal O (h^\infty)$. Coming back to the initial problem, the analysis of the tunneling between the different wells leads to a localization of the
   spectrum in a family of intervals whose center is $\mathcal O (h^\infty)$ close to the sequence $\lambda_k (h)$ and whose size is exponentially small and can be measured precisely.

   The next step is that in each of these intervals the spectral analysis of the  restriction of our initial operator appears to be the $h'$-quantization
    of an operator whose symbol is close (after renormalization) to $\cos \xi + \cos x$ and where $\frac{h'}{2\pi} \equiv \frac{2\pi}{h}$ (modulo $\mathbb Z$).
    This relation gives a strong link between the corresponding continued fraction expansions of $\alpha =\frac{h}{2\pi}$ and $\alpha'=\frac{h'}{2\pi}$ since
$
\frac{h'}{2\pi} =[a_2,a_3,\cdots]\,.
$
    Under our assumption $h'$ is small (because  $a_2$ is assumed to be large) and we can redo the same analysis leading again to a new familly of intervals. Of course, this is quite technical to control the uniformity of the constants appearing in the renormalization procedure.
    In the first step $E=0$ corresponds to  critical values of $p(x,\xi)$  with a saddle point. The analysis near this point is much more involved and the introduction of this $\epsilon_0$ permits to avoid this analysis  in our paper like in \cite{HS2}.

\begin{figure}
  \centering
  \includegraphics[width=8cm]{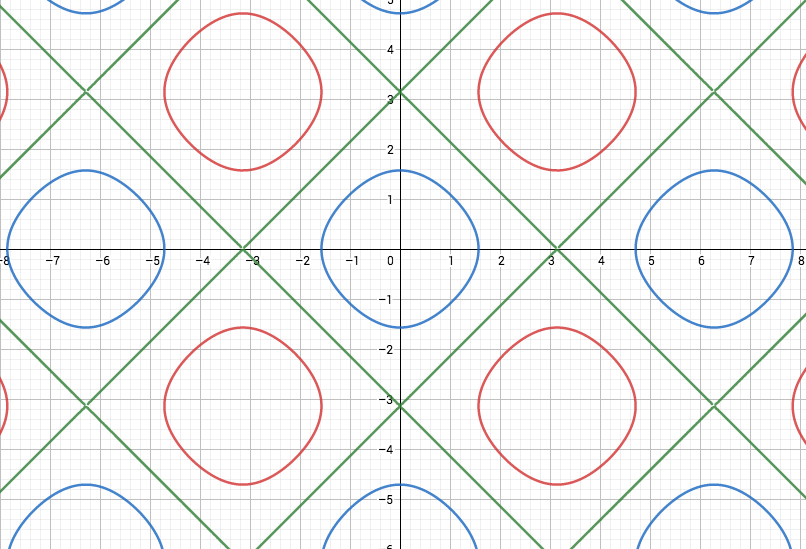}\\
  \caption{Energy levels for $(x,\xi) \mapsto \cos x + \cos \xi$.  In Green $E=0$, in blue $E=1$, in red $E=-1$. }\label{F-R1}
\end{figure}
\subsection{Semi-classical analysis near a rational}
When $\alpha = \frac p q +\hbar $, one can show that $\Sigma_{\alpha,1}$ is the spectrum of the Weyl $h$-quantization of
$M_{p,q} (x,\xi)$ (with $h= 2\pi \hbar$). Hence we have again to perform a semi-classical analysis but this time for a system of $h$-pseudodifferential operator.
As shown by Chamber's formula \eqref{chambers}, the semi-classical analysis for $\hbar$ small is strongly related with the semi-classical analysis of a function of  $\cos q x + \cos q hD_x$ (see \eqref{chambers2}). Hence, near each of the $q$ bands, we have to perform a spectral analysis which is close to the analysis of the previous subsection. When $q$ is even, there is a need
 for a special analysis for the two central touching bands (see Appendix).

\section{The lower bound of the Hausdorff dimension of the spectrum}

In this section, we prove Theorem \ref{thm1} and Theorem \ref{thm2}.  At first we construct a Cantor subset of the spectrum based on Theorem \ref{hs}. Then we estimate the lower bound of the Hausdroff dimension of this Cantor set. Finally, we prove the main theorems.

\subsection{A subset of the spectrum}\

Fix a frequency $\alpha=[a_1,a_2,\cdots]$ satisfying the condition in Theorem \ref{hs}.
 For each $n\ge1$, we introduce
$$
\kappa_n:=[b_1a_{m+n}]\,,
$$
where  $[x]$  denotes  the integer part of the number $x$,  $b_1$ is defined in Theorem~\ref{hs}.

In the following, we will construct a Cantor subset of $\Sigma_{1,\alpha}\cap J_1$, where $J_1$ the most left interval of 0-generation.
For all $n\ge1$, define
$$
\widehat \Theta_{n}:=\{1\}\times\prod_{i=1}^n\{1,2,\cdots, \kappa_i\}\ \ \ \text{ and }\ \ \ \mathscr{C}_n:=\bigcup_{\theta\in \widehat \Theta_n}J_\theta.
$$
By \eqref{m-n-theta}, we have $\widehat \Theta_n\subset \Theta_n$. Define the set of  codings as
$$
 \widehat \Theta_\infty:=\{1\}\times\prod_{i=1}^\infty\{1,2,\cdots, \kappa_i\}.
$$
For any $\omega=\omega_0\omega_1\omega_2\cdots\in \widehat \Theta_\infty$, we denote the prefix of length $n+1$ of $\omega$ by
$$
\omega|_n:=\omega_0\cdots\omega_n.
$$

By  Theorem~\ref{hs}, $\mathscr{C}_n$ is a decreasing sequence of compact sets. Define
$$
\mathscr{C}:=\bigcap_{n\ge0}\mathscr{C}_n.
$$
 By Theorem \ref{hs} (iii) and \eqref{introm}, for any $\theta\in \widehat\Theta_n$, we have
$$
|J_\theta|=|J_1|\prod_{k=1}^n\frac{|J_{\theta|_k}|}{|J_{\theta|_{k-1}}|}\le |J_1|\prod_{k=1}^n e^{-d_1a_{k+m}}\le |J_1|e^{-d_1C_1n}.
$$
Thus, $|J_\theta|$   tends to $0$ when the length of the word $\theta$ tends to infinity. So,  $\mathscr{C}$ is a Cantor set. Indeed, the definition of $\widehat \Theta_n$ make sure that the chosen band always avoid the excluded intervals of size $\approx \epsilon_0$ at each step.

\begin{prop}\label{prop-3.1}
$\mathscr{C}\subset \Sigma_{1,\alpha}\cap J_1$.
\end{prop}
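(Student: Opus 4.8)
Looking at this proposition, I need to prove that $\mathscr{C} \subset \Sigma_{1,\alpha} \cap J_1$.

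The plan is to establish two containments: $\mathscr{C} \subset J_1$ and $\mathscr{C} \subset \Sigma_{1,\alpha}$. The first is essentially immediate from the construction, while the second requires a closure/compactness argument using the covering property in Theorem~\ref{hs}(i).

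First, I would observe that $\mathscr{C} \subset J_1$ trivially: every $\theta \in \widehat\Theta_n$ has first letter $1$, so by the nesting property $J_{\theta \cdot i} \subset J_\theta$ from Theorem~\ref{hs}, we have $J_\theta \subset J_1$ for all $\theta \in \widehat\Theta_n$ and all $n$; hence $\mathscr{C}_n \subset J_1$ for every $n$, and therefore $\mathscr{C} = \bigcap_n \mathscr{C}_n \subset J_1$. Since $J_1$ is closed, this needs nothing more.

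Next, for $\mathscr{C} \subset \Sigma_{1,\alpha}$, I would argue as follows. Fix $x \in \mathscr{C}$. For each $n \ge 0$, since $x \in \mathscr{C}_n = \bigcup_{\theta \in \widehat\Theta_n} J_\theta$, there is some $\theta^{(n)} \in \widehat\Theta_n$ with $x \in J_{\theta^{(n)}}$; moreover, by the nesting $J_{\theta \cdot i} \subset J_\theta$, these can be chosen coherently, i.e. $\theta^{(n)}$ is a prefix of $\theta^{(n+1)}$, so they determine a single $\omega \in \widehat\Theta_\infty$ with $x \in J_{\omega|_n}$ for all $n$. Now I want to produce points of $\Sigma_{1,\alpha}$ converging to $x$. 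By Theorem~\ref{hs}(ii), since $\omega|_n \in \widehat\Theta_n \subset \Theta_n$ and $\Theta_n \subset \Theta$ (indeed for $n \ge 1$, $\omega|_n \in \Theta_{(n+1)-1}$ in the indexing of the theorem — here one must be slightly careful matching the generation index $k$ in (ii) to $n$), the boundary $\partial J_{\omega|_n} \subset \Sigma_{1,\alpha}$. Pick $y_n \in \partial J_{\omega|_n}$. Then $|x - y_n| \le |J_{\omega|_n}| \to 0$ by the estimate $|J_{\omega|_n}| \le |J_1| e^{-d_1 C_1 n}$ already established in the excerpt. Since $\Sigma_{1,\alpha}$ is closed, $x = \lim_n y_n \in \Sigma_{1,\alpha}$.

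The main obstacle — really the only subtlety — is bookkeeping the generation indices: Theorem~\ref{hs}(ii) asserts $\partial J_\theta \subset \Sigma_{1,\alpha}$ for $\theta \in \Theta_{k-1}$ with $k \ge 1$, so it applies to every element of $\Theta_n$ for $n \ge 0$ (taking $k = n+1$), and in particular to all $\omega|_n$; one should also confirm $\widehat\Theta_n \subset \Theta_n$, which is exactly the content of $\kappa_i \le n_\theta$ guaranteed by \eqref{m-n-theta} (and $1 \le \kappa_i$, which requires $b_1 a_{m+i} \ge 1$, ensured by taking $C_1$ large enough in Theorem~\ref{hs} so that $b_1 C_1 \ge 1$). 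Alternatively — and perhaps more cleanly — one can avoid even invoking (ii) by noting that $\mathscr{C}_n$ is built from bands covering $\Sigma_{1,\alpha}$ in the sense of (i), but since $\mathscr{C}_n$ only uses the subfamily $\widehat\Theta_n$ rather than all of $\Omega_n \cup \Theta_n$, the covering property does not directly give $\Sigma_{1,\alpha} \cap J_{\omega|_n} \subset \mathscr{C}_n$; so the boundary argument via (ii) is the natural route. I would therefore present the proof as: coherent choice of $\omega$, application of (ii) to get $y_n \in \partial J_{\omega|_n} \subset \Sigma_{1,\alpha}$, and the exponential shrinking of $|J_{\omega|_n}|$ to conclude by closedness of $\Sigma_{1,\alpha}$.
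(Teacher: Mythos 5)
Your proof is correct and follows essentially the same route as the paper's: fix $x\in\mathscr{C}$, take its coding $\omega\in\widehat\Theta_\infty$, use Theorem~\ref{hs}(ii) to place boundary points of $J_{\omega|_n}$ in $\Sigma_{1,\alpha}$, and conclude by closedness since $|J_{\omega|_n}|\to 0$. The paper just picks the left endpoint of $J_{\omega|_n}$ specifically and folds the containment in $J_1$ into the same limiting argument; your extra bookkeeping on generation indices and on $\widehat\Theta_n\subset\Theta_n$ is sound and only makes explicit what the paper leaves implicit.
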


\begin{proof}
By Theorem \ref{hs} (ii), for each $\theta\in \widehat \Theta_n$ we have $\partial J_\theta\subset \Sigma_{1,\alpha}\cap J_1$. By the construction of $\mathscr{C}$ and Theorem \ref{hs}, for any $x\in\mathscr{C}$, there exists a unique coding sequence $\omega\in\widehat \Theta_\infty$ such that
$$
\{x\}=\bigcap_{n\ge0}J_{\omega|_n}.
$$
Let $x_n$ be the left endpoint of $J_{\omega|_n}$, then $x_n\in\Sigma_{1,\alpha}\cap J_1$ and $x_n\to x.$
 Since $\Sigma_{1,\alpha}\cap J_1$ is compact, we conclude that $x\in \Sigma_{1,\alpha}\cap J_1.$
\end{proof}

\subsection{The lower bound of $\dim_H \mathscr{C}$ }\

 We need to use a basic fact from fractal geometry which we recall now.
Assume  $X\subset \R$ is a Borel set. Let $\mu$ be a probability measure supported on $X$. For any $x\in X$, the  lower local dimension of $\mu$ at $x$ is defined by
$$
\underline{d}_\mu(x):=\liminf_{r\to 0}\frac{\log\mu( B(x,r))}{\log r}.
$$

\begin{prop} [Prop. 10.1 of \cite{Fal}] \label{fal} 
If $\underline{d}_\mu(x)\ge d_0$ for $\mu$-a.e. $x\in X$, then $\dim_H X\ge d_0\,.$
\end{prop}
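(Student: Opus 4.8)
This is a standard instance of the mass distribution principle, and the plan is to fix an arbitrary $s<d_0$ and show $\mathcal H^{s}(X)>0$, which forces $\dimension X\ge s$; letting $s\uparrow d_0$ then gives $\dimension X\ge d_0$. The essential idea is to upgrade the pointwise lower bound on $\underline d_\mu$ into a \emph{uniform} ball-mass estimate of the form $\mu(B(x,r))\le r^{s}$ valid on a set of positive $\mu$-measure.

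First I would stratify $X$. Since $\underline d_\mu(x)\ge d_0>s$ for $\mu$-a.e.\ $x$, almost every $x$ admits some $r_x>0$ with $\mu(B(x,r))\le r^{s}$ for all $0<r\le r_x$. Putting
$$
X_k:=\{x\in X:\ \mu(B(x,r))\le r^{s}\ \text{ for all }0<r\le 1/k\},\qquad k\in\N,
$$
one gets an increasing sequence $(X_k)_k$ with $\mu\big(\bigcup_k X_k\big)=\mu(X)=1$, so $\mu(X_{k_0})=:c>0$ for some $k_0$. On $X_{k_0}$ the ball-mass bound holds uniformly below scale $1/k_0$.

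Next I would run the covering estimate on $X_{k_0}$. Let $(U_i)_i$ be any cover of $X_{k_0}$ by intervals with $\meas(U_i)\le 1/(2k_0)$. For each $i$ with $U_i\cap X_{k_0}\ne\emptyset$, choosing $x_i\in U_i\cap X_{k_0}$ we have $U_i\subset B(x_i,2\meas(U_i))$ with $2\meas(U_i)\le 1/k_0$, hence $\mu(U_i)\le\big(2\meas(U_i)\big)^{s}=2^{s}\meas(U_i)^{s}$. Summing over $i$ and using $\sum_i\mu(U_i)\ge\mu(X_{k_0})=c$ yields $\sum_i\meas(U_i)^{s}\ge 2^{-s}c$ for every such cover, so $\mathcal H^{s}(X_{k_0})\ge 2^{-s}c>0$ and therefore $\dimension X\ge\dimension X_{k_0}\ge s$, which finishes the argument once $s\to d_0^-$.

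The argument is essentially routine (it is Frostman's lemma in disguise), so I do not expect a genuine obstacle. The only two points needing a little care are: (i) passing from the $\mu$-a.e.\ hypothesis to a single set $X_{k_0}$ of positive measure on which the estimate is uniform, handled by the countable exhaustion above; and (ii) matching the interval-cover definition of $\dimension$ used in this paper with ball covers, which only costs the harmless multiplicative constant $2^{s}$.
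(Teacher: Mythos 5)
Your argument is correct. The paper does not give a proof of this proposition — it simply cites Falconer, Proposition 10.1 — and your mass-distribution (Frostman-type) argument is precisely the standard proof in that reference: pass from the a.e.\ pointwise lower bound on $\underline d_\mu$ to a uniform ball-mass estimate $\mu(B(x,r))\le r^s$ on a stratum $X_{k_0}$ of positive $\mu$-measure, apply the covering estimate to bound $\mathcal H^s(X_{k_0})$ from below, and let $s\uparrow d_0$.
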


 For given $c_1>0$ and $d_1>0$, we introduce $\tilde C (c_1,d_1) >1/d_1$  such that
\begin{equation}\label{tilde-C}
\frac{c_1\exp(d_1\tilde C)}{\tilde C}>2\,.
\end{equation}

\begin{prop}\label{prop-3.2}
   Let  $ \widehat m$, $M$,  $\epsilon_0 \in (0,\epsilon_1),$ $b_1,$ $c_1$, $d_1$, $d_
2$ be the constants given as in Theorem \ref{hs}.  Fix $m$ with $0\le m\le  \widehat m.$ Assume that  $\alpha=[a_1,a_2,\cdots]$ satisfies   $1\leq a_i\le M$ for $1\le i\le m$ and  $ a_i\ge \max\{C_1,\tilde C (c_1,d_1) \}$ for $i\ge m+1$. Then
$$
\dim_H \mathscr{C}\ge \frac{\log (b_1/2)+\log G_\ast(\alpha)}{d_2\, A^\ast(\alpha)}.
$$
\end{prop}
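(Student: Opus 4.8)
The plan is to apply the mass distribution principle, Proposition~\ref{fal}, to the natural equidistributed measure on $\mathscr{C}$. For each $n$ and each $\theta\in\widehat\Theta_n$ the chosen sub-bands $\{J_{\theta\cdot i}:1\le i\le\kappa_{n+1}\}$ are pairwise disjoint subintervals of $J_\theta$ (they are ordered left to right with positive gaps by \eqref{low}), and, as seen in the proof of Proposition~\ref{prop-3.1}, $\mathscr{C}=\bigcap_n\mathscr{C}_n$ is coded bijectively by $\widehat\Theta_\infty$; hence there is a unique Borel probability measure $\mu$ on $\mathscr{C}$ with $\mu(J_\theta)=(\kappa_1\cdots\kappa_n)^{-1}$ for $\theta\in\widehat\Theta_n$ (this is consistent since each chosen band has exactly $\kappa_{n+1}$ chosen children). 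It then suffices to prove $\underline{d}_\mu(x)\ge d_0:=\dfrac{\log(b_1/2)+\log G_\ast(\alpha)}{d_2\,A^\ast(\alpha)}$ for every $x\in\mathscr{C}$. We may assume $\log(b_1/2)+\log G_\ast(\alpha)>0$, otherwise the statement is vacuous; and since $A^\ast(\alpha)\ge G_\ast(\alpha)$ by AM--GM while the $a_i$ are large, $d_0\le 1$.

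Fix $x\in\mathscr{C}$ with coding $\omega\in\widehat\Theta_\infty$, and set $\delta_n:=|J_{\omega|_n}|$, $g_n:=\frac{c_1}{a_{m+n}}\,\delta_{n-1}$, $M_n:=-\log\mu(J_{\omega|_n})$, $L_n:=-\log\delta_n$. Theorem~\ref{hs}(iii) gives $e^{-d_2a_{m+k}}\le\delta_k/\delta_{k-1}\le e^{-d_1a_{m+k}}$, hence $|J_1|\,e^{-d_2\sum_{k\le n}a_{m+k}}\le\delta_n\le|J_1|\,e^{-d_1\sum_{k\le n}a_{m+k}}$; \eqref{low} gives that consecutive sub-bands of $J_{\omega|_{n-1}}$ are at mutual distance $\ge g_n$; and since $a_{m+n}\ge\tilde C(c_1,d_1)>1/d_1$ and $t\mapsto e^{d_1t}/t$ is increasing on $(1/d_1,\infty)$, inequality~\eqref{tilde-C} gives $g_n/\delta_n\ge c_1e^{d_1a_{m+n}}/a_{m+n}\ge c_1e^{d_1\tilde C}/\tilde C>2$. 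Two consequences: a ball of radius $r<\delta_n$ cannot meet $\mathscr{C}$ outside $J_{\omega|_n}$ (the distance from $J_{\omega|_n}$ to any other chosen band of the same or an earlier generation is $\ge g_n>2\delta_n>2r$), and $\delta_{n+1}<g_{n+1}<\delta_n$, so $[\delta_{n+1},\delta_n)=[\delta_{n+1},g_{n+1})\cup[g_{n+1},\delta_n)$.

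Next I estimate $\mu(B(x,r))$ for $\delta_{n+1}\le r<\delta_n$. If $r<g_{n+1}$, then $B(x,r)$ meets no chosen band other than $J_{\omega|_{n+1}}$, so $\mu(B(x,r))\le\mu(J_{\omega|_{n+1}})$ and hence $\frac{\log\mu(B(x,r))}{\log r}\ge\frac{-\log\mu(J_{\omega|_{n+1}})}{-\log\delta_{n+1}}=:s_{n+1}$. If $g_{n+1}\le r<\delta_n$, then $B(x,r)$ carries mass only from $\mathscr{C}\cap J_{\omega|_n}$, and there it meets at most $1+2r/g_{n+1}\le 3r/g_{n+1}$ chosen children (consecutive ones are $\ge g_{n+1}$ apart); using $\kappa_{n+1}\ge(b_1/2)a_{m+n+1}$ and $g_{n+1}=c_1\delta_n/a_{m+n+1}$,
\[
\mu(B(x,r))\ \le\ \frac{3r}{g_{n+1}}\cdot\frac{\mu(J_{\omega|_n})}{\kappa_{n+1}}\ \le\ \frac{C_4\,r}{\delta_n}\,\mu(J_{\omega|_n}),\qquad C_4:=\frac{6}{b_1c_1}\,,
\]
and then, with $-\log r>L_n$ and $s_n:=M_n/L_n$, one checks $\frac{\log\mu(B(x,r))}{\log r}\ge 1-\frac{L_n-M_n+\log C_4}{-\log r}\ge\min\{s_n,1\}-\frac{|\log C_4|}{L_n}$. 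Combining the two cases, $\frac{\log\mu(B(x,r))}{\log r}\ge\min\{s_n,s_{n+1},1\}-|\log C_4|/L_n$ for all $r\in[\delta_{n+1},\delta_n)$.

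Finally, $M_n=\sum_{k\le n}\log\kappa_k\ge n\log(b_1/2)+\sum_{k\le n}\log a_{m+k}$ and $L_n\le d_2\sum_{k\le n}a_{m+k}-\log|J_1|$, so for large $n$
\[
s_n\ \ge\ \frac{\log(b_1/2)+\tfrac1n\sum_{k\le n}\log a_{m+k}}{d_2\cdot\tfrac1n\sum_{k\le n}a_{m+k}-\tfrac1n\log|J_1|}\,;
\]
since $\liminf_n\tfrac1n\sum_{k\le n}\log a_{m+k}=\log G_\ast(\alpha)$, $\limsup_n\tfrac1n\sum_{k\le n}a_{m+k}=A^\ast(\alpha)$, and $\liminf(u_n/v_n)\ge(\liminf u_n)/(\limsup v_n)$ for eventually positive sequences, this yields $\liminf_n s_n\ge d_0$. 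As $L_n\to\infty$, it follows that $\underline{d}_\mu(x)=\liminf_{r\to0}\frac{\log\mu(B(x,r))}{\log r}\ge\liminf_n(\min\{s_n,s_{n+1},1\}-|\log C_4|/L_n)\ge\min\{d_0,1\}=d_0$, and since $x$ was arbitrary, Proposition~\ref{fal} gives $\dim_H\mathscr{C}\ge d_0$. The delicate point is the middle scale $g_{n+1}\le r<\delta_n$, where $B(x,r)$, though shorter than its enclosing band $J_{\omega|_n}$, can already cover a fixed fraction of its $\kappa_{n+1}\asymp a_{m+n+1}$ children; controlling this loss relies precisely on the separation estimate~\eqref{low} together with $g_n>2\delta_n$, which is exactly why the quantitative choice~\eqref{tilde-C} of $\tilde C(c_1,d_1)$ — hence the hypothesis $a_i\ge\tilde C(c_1,d_1)$ — is needed, plus the mild $\liminf$/$\limsup$ bookkeeping that transfers the band-scale exponents $s_n$ to every intermediate radius.
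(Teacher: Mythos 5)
Your proposal is correct and uses the same overall strategy as the paper's proof: construct the natural Bernoulli-type measure on $\mathscr{C}$ with $\mu(J_\theta)=(\kappa_1\cdots\kappa_n)^{-1}$, exploit the separation forced by the choice \eqref{tilde-C} of $\tilde C(c_1,d_1)$ (so that gap-to-band ratio exceeds $2$), and conclude via Proposition~\ref{fal}. Where you diverge is in how the local-dimension estimate at a point $x$ is organized. The paper picks the scale index $n_r$ by \emph{containment} --- the first $n_r$ with $J_{\omega|_{n_r}}\subset B(x,r)$, so that automatically $|J_{\omega|_{n_r}}|<2r\le 2|J_{\omega|_{n_r-1}}|$ --- and then proves a single ``Claim'' that $B(x,r)$ meets only the one band $J_{\omega|_{n_r-1}}$ of that generation (its proof propagates the gap estimate down the chain $|G|\ge 2|J_{\omega|_{s+1}}|\ge\cdots\ge 2|J_{\omega|_{n_r-1}}|\ge 2r$). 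This gives $\mu(B(x,r))\le \mu(J_{\omega|_{n_r-1}})$ and $-\log r\le d_2\sum_{k\le n_r}a_{m+k}+\log2-\log|J_1|$ in one stroke, with no intermediate-scale case. You instead index scales by \emph{length}, $\delta_{n+1}\le r<\delta_n$, which forces a two-case split: when $r<g_{n+1}$ you recover the paper's bound, but for $g_{n+1}\le r<\delta_n$ you need the extra counting argument (at most $3r/g_{n+1}$ children met) to control how much of $J_{\omega|_n}$ the ball can exhaust. The paper avoids this intermediate range entirely through its choice of $n_r$, at the cost of a one-step offset (numerator indexed to $n_r-1$, denominator to $n_r$) that washes out in the $\liminf$; your version is a bit heavier on bookkeeping (the $\min\{s_n,s_{n+1},1\}$ manipulations and the remark that $d_0\le1$) but is equally valid and in fact gives a slightly sharper pointwise bound at the intermediate scale, which the stated result does not require. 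Both rely on exactly the same key quantitative input, namely \eqref{low} combined with \eqref{tilde-C}.
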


\begin{proof}
 For any $n\ge 1$, if we write
 $$
 \mathcal{F}_n:=\{\mathscr{C}\cap J_\theta: \theta\in \hat\Theta_n\}\,,
 $$
 then $\mathcal{F}_n$ is a finite $\sigma$-algebra on $\mathscr{C}$. We define a measure $\mu_n$ on $(\mathscr{C}, \mathcal{F}_n)$ as
 $$
 \mu_n(\mathscr{C}\cap J_\theta):=\frac{1}{\prod_{i=1}^n\kappa_i}.
 $$
 By  Carath\'eodory extension theorem (see for example \cite{Ro}), there exists a unique probability measure $\mu$ on $\mathscr{C}$ such that $\mu|_{\mathcal{F}_n}=\mu_n.$

Now we fix  $x\in \mathscr{C}$, and will estimate $\underline{d}_\mu(x)$. Let $\omega\in\widehat \Theta_\infty$  be the coding of $x$, that is,
$$
\{x\}=\bigcap_{n\ge 1}J_{\omega|_n}.
$$
Take $r>0$. There exists a unique $n_r\in \N$ such that
$$
J_{\omega|_{n_r}}\subset B(x,r),\ \ \  J_{\omega|_{n_r-1}}\not\subset B(x,r).
$$
This implies that
\begin{equation}\label{r-esti}
|J_{\omega|_{n_r}}|< 2r, \ \ \ |J_{\omega|_{n_r-1}}|\ge r.
\end{equation}


\begin{figure}[h]
\begin{center}
\includegraphics[width=1\textwidth]{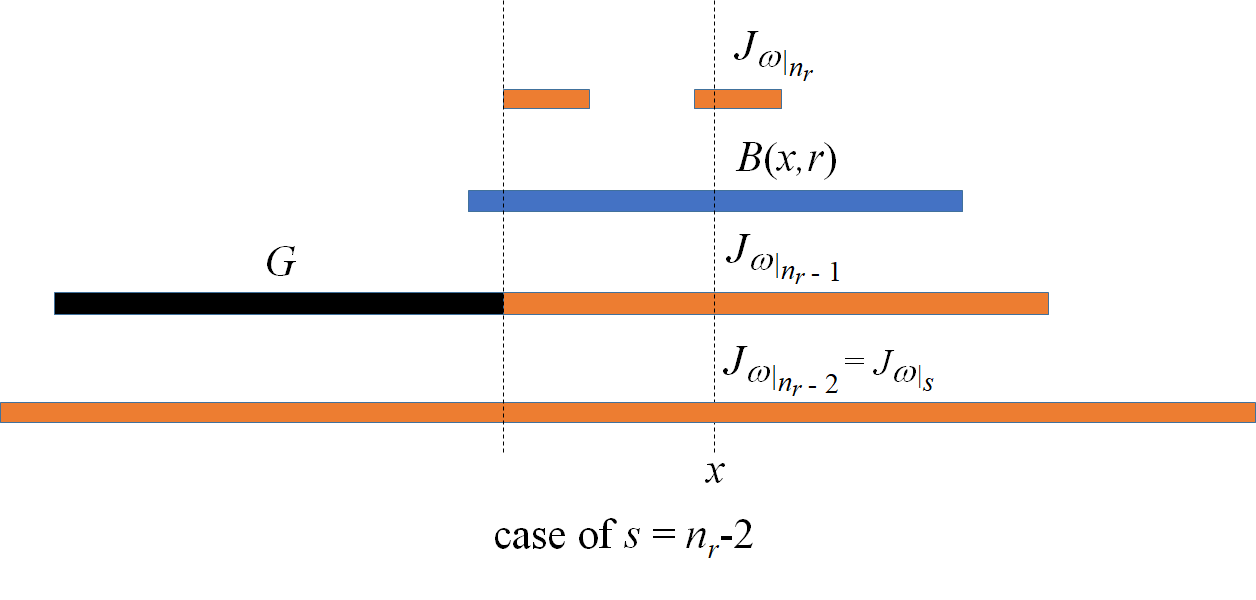}
\caption{  }\label{figure-2}
\end{center}
\end{figure}


~\\
\noindent {\bf Claim:} $B(x,r)$ can only intersect one band of $(n_r-1)$-generation, which is $J_{\omega|_{n_r-1}}$.
\\

\noindent $\lhd$
 If $B(x,r)\subset  J_{\omega|_{n_r-1}}$, then the claim holds trivially. In the following, we assume $B(x,r)\not\subset  J_{\omega|_{n_r-1}}$.  
To prove the claim, we only need to show that  the lengths of the gaps in the left and right of $J_{\omega|_{n_r-1}}$ are all bigger than $2r$.

We  call any component of
${\rm Co}(\mathscr{C})\setminus \mathscr{C}$ a gap of $\mathscr{C}$, where ${\rm Co}(\mathscr{C})$ is the convex hull of $\mathscr{C}$. A gap $G$ is called of order $k$, if $G$ is a subset of some  band of $k$-generation but is not a subset of any  band of $(k+1)$-generation.  

Denote the  gap in the left of  $J_{\omega|_{n_r-1}}$ by $G$. By our construction of $\mathscr{C}$,  G is a gap of order $s$ for some $ s \le n_r-2$ (see Figure \ref{figure-2} and Figure \ref{figure-3}). That is, $G\subset J_{\theta}$ for some $\theta\in \hat\Theta_s$,  but $G$ is not a subset of any band of  $(s+1)$-generation.

We claim that $\theta=\omega|_{s}$. If otherwise, there exists some $\tilde s\le s$ such that $\hat\theta:=\theta_0\cdots \theta_{\tilde s-1}=\omega|_{\tilde s-1}$  but $\theta_{\tilde s}\ne \omega_{\tilde s}.$ Then $J_\theta$ and $J_{\omega|_s}$ are two different descendants of same generation of $J_{\hat\theta}$, hence disjoint. In particular, $\bar{G}\cap J_{\omega|_{s}}=\emptyset$, since $\bar{G}\subset J_\theta$, where $\bar{G}$ is the closure of $G.$  On the other hand, the right endpoint of $G$ is the left endpoint of $J_{\omega|_{n_r-1}}$, thus $\bar{G}\cap J_{\omega|_{n_r-1}}\ne\emptyset\,$. Notice that, $J_{\omega|_{n_r-1}} \subseteq J_{\omega|_{s}}$ since $s\le n_r-1$. So, $\bar{G}\cap J_{\omega|_{s}}\ne\emptyset$, which is a contradiction.

Thus $G\subset J_{\omega|_s}$ is a gap of order $s$.
Notice that, $J_{\omega|_{n_r-1}}\subset J_{\omega|_{s+1}}$ since $s+1\le n_r-1$. Also, since $a_{m+s+1}\ge \tilde C$, by Theorem \ref{hs} (ii), (iii),  \eqref{tilde-C} and \eqref{r-esti}, we have
\begin{equation}\label{only-for-exp}
|G|\ge \frac{c_1|J_{\omega|_{s}}|}{a_{m+s+1}}\ge \frac{c_1e^{d_1a_{m+s+1}}|J_{\omega|_{s+1}}|}{a_{m+s+1}}\ge 2|J_{\omega|_{s+1}}|\ge 2|J_{\omega|_{n_r-1}}|\ge 2r.
\end{equation}
By the same argument, we can show that the gap in the right of $J_{\omega|_{n_r-1}}$ also has length bigger than $2r$. Then the claim follows.
\hfill $\rhd$

\begin{figure}[h]
\begin{center}
\includegraphics[width=1\textwidth]{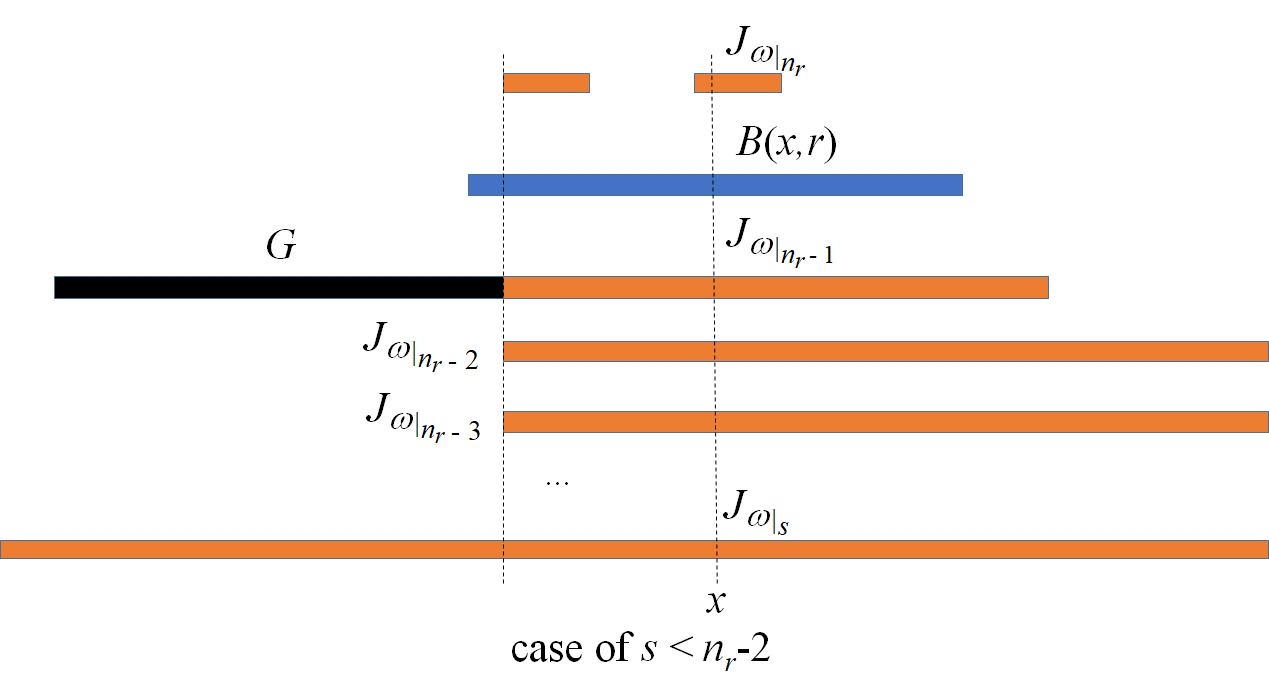}
\caption{  }\label{figure-3}
\end{center}
\end{figure}

By the claim,  we have
\begin{equation}\label{mu-B}
\mu(B(x,r))\le \mu(J_{\omega|_{n_r-1}})=\frac{1}{\prod_{i=1}^{n_r-1} \kappa_i}.
\end{equation}

By Theorem \ref{hs} (iii) and the assumption, we have
$$
  \frac{|J_{\omega|_k}|}{|J_{\omega|_{k-1}}|}\ge e^{-d_2 a_{m+k}}.
$$
Consequently,
\begin{eqnarray*}
  |J_{\omega|_{n_r}}|=|J_1|\prod_{k=1}^{n_r}\frac{|J_{\omega|_k}|}{|J_{\omega|_{k-1}}|}\ge |J_1| e^{-d_2 \sum_{k=1}^{n_r}a_{m+k}}.
\end{eqnarray*}
By \eqref{r-esti}, we get
\begin{equation}\label{log-r}
\log r\ge\log|J_1|-\log2 -d_2\sum_{k=1}^{n_r}a_{m+k}\,.
\end{equation}
Recall  that $\kappa_i=[b_1a_{m+i}]$ and by the assumption, $a_{m+i}\ge 2/b_1$. So we have  $\kappa_i \ge b_1a_{m+i}/2\,.$ 
   Combining \eqref{mu-B} and \eqref{log-r}, we get
\begin{eqnarray*}
\frac{\log\mu( B(x,r))}{\log r}&\ge& \frac{\sum_{i=1}^{n_r-1}\log \kappa_i}{d_2\sum_{k=1}^{n_r}a_{m+k}+\log2-\log|J_1|}\\
&\ge&\frac{(n_r-1)\log \frac{b_1}{2}+\sum_{i=1}^{n_r-1}\log a_{m+i}}{d_2\sum_{k=1}^{n_r}a_{m+k}+\log2-\log|J_1|}.
\end{eqnarray*}
By taking the lower limit, we get
$$
\underline{d}_\mu(x)=\liminf_{r\to 0}\frac{\log\mu( B(x,r))}{\log r}\ge  \frac{\log (b_1/2)+\log G_\ast(\alpha)}{d_2A^\ast(\alpha)}.
$$
  then by  Proposition \ref{fal}, we conclude that
$$
\dim_H \mathscr{C}\ge \frac{\log (b_1/2)+\log G_\ast(\alpha)}{d_2A^\ast(\alpha)}.
$$
This achieves the proof of Proposition \ref{prop-3.2}.
 \end{proof}
 
  \subsection{Proof of Theorem \ref{thm2}}
 Define
 $$
 C:=\max\{C_1,\tilde C (c_1,d_1),  2/b_1, (2/b_1)^2\}\ \ \text{ and }\ \ \ C':=2d_2.
 $$

 If $b_1\ge2$, then by Proposition \ref{prop-3.1} and \ref{prop-3.2},
 $$
 \dim_H\Sigma_{1,\alpha}\ge \dim_H\mathscr{C}\ge \frac{\log G_\ast(\alpha)}{d_2A^\ast(\alpha)}\ge \frac{1}{C'}\frac{\log G_\ast(\alpha)}{A^\ast(\alpha)}.
 $$

   If $b_1<2$, then we have $G_\ast(\alpha)\ge C\ge (2/b_1)^2$ and consequently
 $$
 \log G_\ast(\alpha)\ge -2\log (b_1/2).
 $$
Then by Proposition \ref{prop-3.1} and \ref{prop-3.2}, we have
 $$
 \dim_H\Sigma_{1,\alpha}\ge \dim_H\mathscr{C}\ge \frac{\log G_\ast(\alpha)+\log(b_1/2)}{d_2A^\ast(\alpha)}\ge \frac{\log G_\ast(\alpha)}{2d_2A^\ast(\alpha)}=\frac{1}{C'}\frac{\log G_\ast(\alpha)}{A^\ast(\alpha)}.
 $$ 
 
 Thus \eqref{Sigma-alpha} follows, and  \eqref{Sigma-alpha-n} is a direct consequence of \eqref{Sigma-alpha}.
  \qed
 
  \subsection{Proof of Theorem  \ref{thm1}}


At first, we show that $\mathscr{F}$ is dense in $(0,1)\setminus\Q$. Let $m=\widehat m=M.$ Let $C$ be the constant in Theorem \ref{thm2}. Define
$$
\mathscr{F}_M:= \{\alpha: a_i\le M, (1\le i\le M); a_i=C\, (i\ge M+1)\}.
$$
Then by Theorem \ref{thm2}, $\mathscr{F}_M\subset \mathscr{F}.$
On the other hand,  for any $ \alpha, \alpha'\in (0,1)\backslash \Q$, let  $n$ be  the
first index for which the continued fraction expansions of $\alpha$
and $\alpha'$ differ and define the distance of $\alpha$ and $\alpha'$ as
$$
d_{H}(\alpha, \alpha')=
\frac{1}{n+1}.
$$
 Endow  $  (0,1) \backslash\Q$ with this topology, then it is easy to see
 that this topology coincide with the usual topology induced from $\R$ and  $\bigcup_{M\ge1}\mathscr{F}_M$ is dense in $(0,1) \backslash\Q $. Hence, $\mathscr{F}$ is also dense in $(0,1) \backslash\Q$. 

Now we show that $\mathscr{F}$ has positive Hausdorff dimension. Indeed, by Theorem \ref{thm2}, we know that
$$
\widehat{\mathscr{F}}:=\{\alpha\in (0,1)\setminus\Q: C\le a_n\le 10\,C\, , n\ge1\}\subset \mathscr{F}.
$$
By Theorem 11 of \cite{Go}, $\widehat{\mathscr{F}}$ has positive Hausdorff dimension.
 \hfill $\Box$
 \appendix
 \section{The statement of Theorem 0.1 in \cite{HS2}}
 We give a translation from the french, correcting also a few typos and adding a few explanatory remarks.
 \begin{thm}
 Let $\hat m \in \mathbb N$  ($\hat m \geq 2$) and $M \geq 2$. There exists $\epsilon_1 >0$ and, for $\epsilon_0 \in (0,\epsilon_1)$,  a constant $C=C(\hat m, M,\epsilon_0) >0$ such that if $\alpha =[a_1,a_2,\dots,] $
  is irrational and satisfies  for some $m\leq \hat m$
 \begin{equation}\label{0.7}
 \begin{array}{ll}
 1 \leq  |a_j| \leq M& \mbox{ for } 0 < j \leq m\\
 |a_j| \geq C & \mbox{ for } j\geq m+1\,,
 \end{array}
 \end{equation}
 then $\Sigma_{1,\alpha}$ is contained in the union of $ q_m$ intervals $I_\ell (h)$ ($\ell=1,\cdots, q_m)$ in the form
 $[\gamma_\ell (h), \delta_\ell (h)]$ with
 \begin{equation}\label{0.8}
 \begin{array}{l}
   \gamma_\ell(h)\,,\, \delta_\ell (h) \in \Sigma_{1,\alpha}\,,\\
 \gamma_\ell < \delta_\ell \leq \gamma_{\ell +1} < \delta_{\ell +1}\,,\\
 \gamma_\ell(h) \geq \gamma_\ell  - C |h|\, \mbox{ and }   \delta_\ell (h) < \delta_\ell + C | h| \,, \\
\gamma_{\ell}(h) \geq \gamma_{\ell} + \frac{1}{C} \sqrt{h} \mbox{ if } \delta_{\ell-1} =\gamma_{\ell}\,,
 \end{array}
 \end{equation}
 where
 \begin{equation}
  \alpha^{(m)} =[a_1,\dots,a_m]= \frac{p_m}{q_m}\,,
  \end{equation}
 \begin{equation}
 h = 2\pi (\alpha - \alpha^{(m)}) \,,
 \end{equation}
 \begin{equation}\label{0.11}
 \cup_\ell [\gamma_\ell,\delta_\ell]  =\Sigma_{1,\alpha^{(m)}} \,,
 \end{equation}
 \begin{equation}\label{0.12}
 d(I_\ell (h), I_{\ell+1}(h))\geq \frac 1C \mbox{ if } \delta_\ell \neq \gamma_{\ell +1}
 \mbox{ and } \geq \frac 1 C \sqrt{|h|} \mbox{ if } \delta_\ell =\gamma_{\ell +1}\,.
 \end{equation}
  For each interval $I_\ell(h)$, $\Sigma_{1,\alpha}\cap I_\ell (h)$ can be described as living in a union of $N_{\ell,j}$ closed intervals $J_j^{(\ell)}$ (indexed by $j\in (- m_{\ell,j}, n_{\ell,j})$) of length $\neq 0$ with
 $\partial J_{j}^{(\ell)} \subset \Sigma_{1,\alpha}\,$, $J_{j+1}^{(\ell)}$ on the right of $J_j^{(\ell)}$ and
 \begin{equation}
 m_{\ell,j} \approx |a_{m+1}| \mbox{ and } n_{\ell,j} \approx |a_{m+1}|\,,
 \end{equation}
 \begin{equation}\label{0.13}
 \frac{1}{|a_{m+1}|} \lesssim d(J_j^{(\ell)}, J_{j+1}^{(\ell)} )  \lesssim \frac{1}{\sqrt{|a_{m+1}|}}\,,
 \end{equation}
 \begin{equation}
 J_0^{(\ell)} \mbox{ has length } 2 \epsilon_0 + \mathcal O (\frac{1}{|a_{m+1}|})\,.
 \end{equation}
 The other bands have size
 \begin{equation}\label{0.16}
 \exp \left( - C(j) |a_{m+1}| \right) \mbox{ with } C(j) \approx 1\,.
 \end{equation}
 For $j\neq 0$, if $\kappa_j^{(\ell)} $ is the affine function sending $J_j^{(\ell)}$ in $[-2,+2]$, then
 $$
 \kappa_j^{(\ell)}(J_{j}^{(\ell)}) \cap \Sigma_{1,\alpha} \subset \cup_k J_{j,k}^{(\ell)}\,,
 $$
 where the $J_{j,k}^{(\ell)}$ have analogous properties to the $J_j^{(\ell)} $ with $a_{m+1}$ replaced by $a_{m+2}$ and \eqref{0.13} can be improved in the form
 \begin{equation}
 d(J_{j,k}^{(\ell)}, J_{j,k+1}^{(\ell)}) \approx \frac{1}{|a_{m+2}|}\,.
 \end{equation}
 One can then iterate indefinitely.
   \end{thm}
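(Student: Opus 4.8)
The plan is to follow the semi-classical renormalization of \cite{HS,HS2} previewed in Section~3; since the statement above merely restates the main theorem of \cite{HS2}, I describe only the strategy. Put $\alpha^{(m)}=p_m/q_m$ and $h=2\pi(\alpha-\alpha^{(m)})$; because $a_{m+1}\ge C$ is large one has $|h|<2\pi/(a_{m+1}q_m^2)$, in particular $|h|$ is small. The first step is the exact identification of $\Sigma_{1,\alpha}$ with the spectrum of $\text{Op}^W_h(M_{p_m,q_m})$, the Weyl $h$-quantization of the $q_m\times q_m$ matrix symbol \eqref{bs1} with $(\theta_1,\theta_2)$ replaced by $(x,\xi)$. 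By the Chambers formula \eqref{chambers}--\eqref{chambers2}, the $q_m$ eigenvalue branches of $M_{p_m,q_m}$ depend on $(x,\xi)$ only through $t=2(\cos q_m x+\cos q_m\xi)\in[-4,4]$, so that $\Sigma_{1,\alpha^{(m)}}=\bigcup_\ell[\gamma_\ell,\delta_\ell]$ and, microlocally in the energy range $[\gamma_\ell,\delta_\ell]$ and away from the excised window of size $\approx\epsilon_0$ about $t=0$, one conjugates the operator to the scalar $h$-quantization of a symbol of the form $\varphi_{\ell,p_m,q_m}\big(2(\cos q_m x+\cos q_m\xi)\big)$; after the dilation $q_m x\mapsto x$ this becomes a scalar $h_1$-quantization, with $|h_1|$ of order $1/a_{m+1}$, of a symbol close (up to a smooth canonical change of variables) to $2(\cos x+\cos\xi)$.

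Next I would run the one-well Bohr--Sommerfeld analysis. A first semi-classical estimate localizes $\Sigma_{1,\alpha}$ in $q_m$ intervals $I_\ell(h)=[\gamma_\ell(h),\delta_\ell(h)]$, which are $\mathcal{O}(|h|)$-neighborhoods of $[\gamma_\ell,\delta_\ell]$; at an edge where two rational bands meet ($\delta_{\ell-1}=\gamma_\ell$) one needs in addition the local saddle analysis at $t=0$, which separates the two bands and produces the weaker $\sqrt{|h|}$ estimates of \eqref{0.8} and \eqref{0.12}. Inside each $I_\ell(h)$ one analyzes the scalar operator of the first step: for $E\ne0$ the level set $\{2(\cos x+\cos\xi)=E\}$ is a disjoint union of smooth closed curves indexed by $\mathbb Z^2$ (Figure~\ref{F-R1}); replacing the periodic symbol by a confining one agreeing with it on a single cell $(-\pi,\pi)^2$ alters the spectrum only by $\mathcal{O}(h_1^\infty)$ near $E$, and for the confined operator Bohr--Sommerfeld furnishes eigenvalues $\lambda_k(h_1)$ whose count in the relevant window is $\approx a_{m+1}$, with consecutive spacing $\gtrsim 1/a_{m+1}$. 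Re-introducing the tunneling between neighboring wells turns each $\lambda_k(h_1)$ into a band of exponentially small width $\exp(-C(j)a_{m+1})$ with $C(j)\approx1$, while the gaps separating bands issued from distinct $k$ retain size $\gtrsim 1/a_{m+1}$; this produces the family $\{J_j^{(\ell)}\}$, the central $J_0^{(\ell)}$ having length $2\epsilon_0+\mathcal{O}(1/a_{m+1})$ from the excision. The inclusion $\partial J_j^{(\ell)}\subset\Sigma_{1,\alpha}$ is read off from the sharp quantization (extrema of the confined Hamiltonian are attained), compatibly with the Floquet band edges of the rational model.

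The renormalization step is to take the spectral restriction of the operator to one band $J_j^{(\ell)}$ with $j\ne0$, conjugate by the affine map $\kappa_j^{(\ell)}$ carrying $J_j^{(\ell)}$ onto $[-2,2]$, and check that the result is, modulo $\mathcal{O}(h_1^\infty)$, the $h_2$-quantization of a symbol which, after a smooth canonical change of variables, is $\mathcal{O}(h_2)$-close to $2(\cos x+\cos\xi)$, with $h_2/(2\pi)\equiv 2\pi/h_1\pmod{\mathbb Z}$ and continued fraction $h_2/(2\pi)=[a_{m+2},a_{m+3},\dots]$, so that $|h_2|$ is again small since $a_{m+2}\ge C$. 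Repeating the one-well analysis with $a_{m+1}$ replaced by $a_{m+2}$ yields the refined family $\{J_{j,k}^{(\ell)}\}$ with the sharper gap estimate $d(J_{j,k}^{(\ell)},J_{j,k+1}^{(\ell)})\approx 1/a_{m+2}$ — the lower and upper tunneling exponents now coincide, the corresponding energy staying away from the separatrix — and one iterates this indefinitely.

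The principal obstacle is \emph{uniformity}: the entire hierarchy must be carried out with constants ($\epsilon_1$, $C$, and the structure constants $b_1,b_2,c_1,d_1,d_2$) that do not deteriorate along the infinitely many renormalizations. This forces one to keep the renormalized symbols inside a fixed bounded family of $\mathcal{O}(1)$-symbols uniformly in the semi-classical parameter, to control the $\mathcal{O}(h^\infty)$ remainders at every stage, and to control the geometry of the wells; it is exactly here that the hypotheses $|a_j|\ge C$ for $j\ge m+1$ with $C$ large, and the removal of the $\epsilon_0$-window about $t=0$, are used. A genuine secondary difficulty is the degenerate analysis at the saddle value $t=0$ — responsible for the $\sqrt{|h|}$ rather than $|h|$ estimates — and, when $q_m$ is even, the separate treatment of the two central bands of $\Sigma_{1,\alpha^{(m)}}$ that touch at $0$; both are handled by a normal-form reduction near the hyperbolic fixed point, performed uniformly in the parameters.
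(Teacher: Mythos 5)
Your proposal is correct and follows the same semi-classical renormalization strategy that the paper outlines in Section~3 and that is carried out in \cite{HS2}; the paper only translates and cites this theorem (Appendix~A) and does not reprove it. Your sketch matches the paper's own overview on all essential points: the identification of $\Sigma_{1,\alpha}$ with the spectrum of the Weyl $h$-quantization of $M_{p_m,q_m}$, the Chambers-formula reduction to a scalar symbol close to $2(\cos x+\cos\xi)$ after dilation, the Bohr--Sommerfeld/tunneling analysis producing the bands $J_j^{(\ell)}$, the renormalization relation $h'/(2\pi)\equiv 2\pi/h\ (\mathrm{mod}\ \mathbb Z)$ shifting the continued fraction, the central issue of uniformity of constants across iterations, and the saddle/touching-band difficulties (addressed via the $\epsilon_0$-excision and, for even $q_m$, Van Mouche's result) that are responsible for the $\sqrt{|h|}$ and $1/\sqrt{|a_{m+1}|}$ estimates.
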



   Here in the statements $a \lesssim b$ means that $a/b \leq C$ where $C$ depends only on $C_0$ and $\epsilon_0$. The same is true when we use the notation  $\mathcal O$ or  $\approx\,$.

   \begin{rem} $\epsilon_0$ corresponds  with the exclusion in each interval and at each step of the renormalization of a small interval of size $\approx 2\epsilon_0$
    for which another analysis has to be done and which was the object of \cite{HS3} (see also \cite{HeKe}). This corresponds  to the energy $0$ for the map $(x,\xi) \mapsto 2( \cos x + \cos \xi$). This refined analysis is not needed here.
    \end{rem}

    \begin{rem}
    The possibility of having $\delta_\ell =\gamma_{\ell +1}$ is due to the occurence of touching bands. Van Mouche \cite{VM}  has proven that it occurs only
     when $q_m$ is even and  for $\ell = \frac{ q_m} {2}$.
     These two touching bands lead to the lower bound \eqref{0.12} and  the weaker estimate in \eqref{0.13}.
     \end{rem}

\section{Acknowledgements}
Q.-H. Liu was supported by NSFC grant (11571030).
Y.-H. Qu was supported by NSFC grant (11431007 and 11790273). 
Q. Zhou was partially supported by NSFC grant (11671192), \textquotedblleft Deng Feng Scholar Program B\textquotedblright of Nanjing University,  Specially-appointed professor programme of Jiangsu province.



\begin{thebibliography}{30}
%
%


\bibitem{AK06}
A. Avila and R. Krikorian,
Reducibility or non-uniform hyperbolicity for quasi-periodic
Schr\"{o}dinger cocycles. Ann. Math. \textbf{164}, 911-940  (2006).

\bibitem{AJ05}
 A. Avila  and  S. Jitomirskaya,
The Ten Martini Problem,
 Ann. Math. \textbf{170},  303-342  (2009).
 
 
\bibitem{AJ08}
 A. Avila  and  S. Jitomirskaya, 
Almost localization and almost reducibility,
 J. Eur. Math. Soc, \textbf{12}, 93-131 (2010).
 

 \bibitem{ALSZ}
A. Avila, Y. Last, M. Shamis, and Q. Zhou,
\newblock On the abominable properties of the Almost Mathieu operator with well approximated frequencies,
\newblock  in preparation.

%


\bibitem{AYZ2} A. Avila, J. You and Z. Zhou, The Dry Ten Martini Problem in the non-critical case, {\it preprint}.

\bibitem{AMS}J.Avron, P. van Mouche, B. Simon, On the measure of the spectrum for the almost Mathieu operator. Commun. Math. Phys. 132, 103-118 (1990)

\bibitem{AOS} J. E. Avron, D. Osadchy and  R. Seiler,   A topological look at the quantum Hall effect. Physics today.  38-42. (2003).
\bibitem  {bel} J. Bellissard,
\newblock Le papillon de Hofstadter.
\newblock  Ast\'erisque {\bf 206} (1992) 7--39.

\bibitem{conjhalf2} J. Bell and R. B. Stinchcombe, Hierarchical band clustering and
fractal spectra in incommensurate systems. J.\ Phys.\ \textbf{A}
$\mathbf{20}$,\ L739--L744 (1987).

 \bibitem{Fal} K. Falconer,
\newblock  Techniques in fractal geometry.
\newblock John Wiley $\&$ Sons, Ltd., Chichester, 1997.

\bibitem{conjhalf3} T. Geisel, R. Ketzmerick and G.Petshel, New class of
level statistics in quantum systems with unbounded diffusion.
Phys.\ Rev.\ Lett.\ $\mathbf{66}$,\ 1651--1654 (1991).

\bibitem{Go} I.J. Good,
The fractional dimensional theory of continued fractions.
Proc. Cambridge Philos. Soc. 37, (1941). 199--228.


\bibitem{Ha} P.G. Harper,  Single band motion of conduction electrons in a uniform magnetic
field, Proc. Phys. Soc. London A. 68, 874--892 (1955).

 \bibitem{HeKe} B. Helffer and P. Kerdelhu\'e.
\newblock On the Total Bandwidth for the Rational Harper's Equation.
\newblock Commun. Math. Phys. 173, 335--356 (1995).


 \bibitem{HS}  B. Helffer and J. Sj\"ostrand, Analyse semi-classique pour l'\'equation de Harper (avec application  \`a   l'\'equation de Schr\"odinger avec champ magn\'etique).
  M\'em. Soc. Math. France,  No. 34,  1--113  (1988).

 \bibitem{HS2}   B. Helffer and J. Sj\"ostrand, Analyse semi-classique pour l'\'equation de Harper II: Comportement semi-classique  pr\`es  d'un rationnel. M\'em. Soc. Math. France, No. 40, 1--139. (1990).

{ \bibitem{HS3}  B. Helffer and J. Sj\"ostrand, Semi-classical analysis for the Harper's equation III: Cantor structure of the spectrum.
\newblock M\'em. Soc. Math. France, No. 39. 1--124 (1989).}


%
%
%
%
%
%


\bibitem{JZ} S. Jitomirskaya and S. Zhang, Quantitative continuity of singular continuous spectral measures and arithmetic criteria for quasiperiodic Schr\" odinger operators. ArXiv preprint in arXiv:1510.07086 (2015).


 \bibitem{JM82}
R. Johnson and J. Moser,
The rotation number for almost periodic potentials. Comm. Math.
Phys, 84, 403-438 (1982).


\bibitem{k}I. Krasovsky. Central spectral gaps of the Almost Mathieu Operator.  Commun. Math. Phys. 351, 419-439 (2017).

\bibitem{last3} Y. Last,  Zero measure spectrum for the almost
Mathieu Operator. Commun.\ Math.\ Phys.\ $\mathbf{164}$,\ 421--432
(1994).

\bibitem{LS} Y. Last and M. Shamis, Zero Hausdorff Dimension Spectrum for the Almost Mathieu Operator. Commun.\ Math.\ Phys. 348,  729--750 (2016).



\bibitem{OA} D. Osadchy and J. E. Avron,  Hofstadter butterfly as quantum phase
diagram.  J. Math Phys. 42, 5665-5671 (2001).

\bibitem{Pe} R. Peierls, Zur Theorie des Diamagnetismus von Leitungselektronen.  Z. Phys. 80, 763-791 (1933).

\bibitem{R} A. Rauh,  Degeneracy of Landau levels in crystals, Phys. Status Solidi B 65, 131-135 (1974).

\bibitem{Ro}H. L. Royden,  Real analysis. Third edition. Macmillan Publishing Company, New York, 1988.

\bibitem{conjhalf1} C. Tang and M. Kohmoto, Global scaling properties of the
spectrum for a quasiperiodic Schr\"{o}dinger equation. Phys.\ Rev.\
\textbf{B} $\mathbf{34}$,\ 2041--2044 (1986).

\bibitem{TKNN} D.J. Thouless,  M. Kohmoto,   M.P.  Nightingale, and   M. Den Nijs,  Quantized Hall conductance in a two dimensional periodic potential. Phys. Rev. Lett. 49, 405--408 (1982).

{
\bibitem{VM} P. Van Mouche.
\newblock The coexistence problem for the discrete Mathieu operator.
\newblock Comm. Math. Phys. 122 (1989), no. 1, 23--33.

}
\bibitem{wilkinson_austin} M. Wilkinson and E.J. Austin, Spectral
dimension and dynamics for Harper's equation. Phys.\ Rev.\
\textbf{B} $\mathbf{50}$,\ 1420--1430 (1994).
\end{thebibliography}
\end{document}